\newtheorem{theorem}{Theorem}
\newtheorem{lemma}{Lemma}
\def\BibTeX{{\rm B\kern-.05em{\sc i\kern-.025em b}\kern-.08em
    T\kern-.1667em\lower.7ex\hbox{E}\kern-.125emX}}
\begin{document}

\title{Trace-Distance based End-to-End Entanglement Fidelity with Information Preservation in Quantum Networks}

\author{Pankaj~Kumar,~\IEEEmembership{Student~Member,~IEEE,}
       Binayak~Kar,~\IEEEmembership{Member,~IEEE,}
    and~Shan-Hsiang~Shen,~\IEEEmembership{Member,~IEEE}
\thanks{This work was supported by the National Science and Technology Council (NSTC), Taiwan, under Grants 112-2221-E-011-057 and 113-2221-E-011-152 (\textit{Corresponding author: Binayak Kar.})}
\thanks{P. Kumar, B. Kar, and S.-H. Shen are with the Department of Computer Science and Information Engineering, National Taiwan University of Science and Technology, Taipei, Taiwan. (e-mail: pnkazaayan@gmail.com, bkar@mail.ntust.edu.tw, and sshen@csie.ntust.edu.tw).
}}

\markboth{}%
{Shell \MakeLowercase{\textit{et al.}}: A Sample Article Using IEEEtran.cls for IEEE Journals}


\maketitle

\begin{abstract}
Quantum networks hold the potential to revolutionize a variety of fields by surpassing the capabilities of their classical counterparts. 
Many of these applications necessitate the sharing of high-fidelity entangled pairs among communicating parties. However, the inherent nature of entanglement leads to an exponential decrease in fidelity as the distance between quantum nodes increases. This phenomenon makes it challenging to generate high-fidelity entangled pairs and preserve information in quantum networks. To tackle this problem, we utilized two strategies to ensure high-fidelity entangled pairs and information preservation within a quantum network. First, we use closeness centrality as a metric to identify the closest nodes in the network. 
Second, we introduced the trace-distance based path purification (TDPP) algorithm, specifically designed to enable information preservation and path purification entanglement routing. 
This algorithm identifies the shortest path within quantum networks using closeness centrality and integrates trace-distance computations for distinguishing quantum states and maintaining end-to-end (E2E) entanglement fidelity. Simulation results demonstrate that the proposed algorithm improves network throughput and E2E fidelity while preserving information compared to existing methods.
\end{abstract}

\begin{IEEEkeywords}
Entanglement fidelity, information preservation, trace-distance, quantum routing.
\end{IEEEkeywords}

\section{Introduction} \label{intro}
Quantum computing, a burgeoning field that leverages principles of quantum physics, holds the potential to solve extraordinarily complex problems beyond the reach of traditional computers \cite{grover1996fast}. In specific problem domains, it has the potential to surpass conventional computing methods. One notable application is quantum computing's potential to resolve NP-hard problems such as leader selection and integer factorization in polynomial time \cite{shor1994algorithms}. The growth of quantum internet infrastructure, which supports crucial communication protocols, facilitates these capabilities \cite{bugalho2023distributing}. 

It is anticipated that each quantum computer will be able to handle only a certain number of quantum bits (qubits) in the future. Small quantum computers can be interconnected through a quantum network to address this issue, forming a distributed processing system \cite{li2021effective}. A quantum network links quantum nodes via optical connections, including quantum processors and repeaters. These nodes can generate, store, exchange, and process quantum information. When two quantum nodes act as the source and destination (S-D) pairs for information exchange, the quantum network establishes an entanglement connection between them. 
Figure \ref{fig.fig1} illustrates the generation of an entanglement link from the source to a destination node. In the network, each quantum node is occupied with either $\rho_i$ or $\sigma_i$, the density matrix of the quantum state as shown in Figure  \ref{fig.fig1}(a). After performing the swapping operation at adjacent nodes as shown in Figure  \ref{fig.fig1}(b), the quantum state changes to a maximally mixed state at the destination nodes. With the aid of a quantum repeater, quantum nodes can be linked over greater distances and perform entanglement-swapping operations. Subsequently, information is transmitted from the source to the destination through qubits over the potentially noisy channel. This process involves performing an Einstein–Podolsky–Rosen (EPR) pair state measurement at the local node, which is supported by classical communication \cite{zhao2021redundant}.

\begin{figure}[!t]
    \centering
    \begin{subfigure}[b]{0.40\textwidth}
    \centering
    \includegraphics[width=\textwidth]{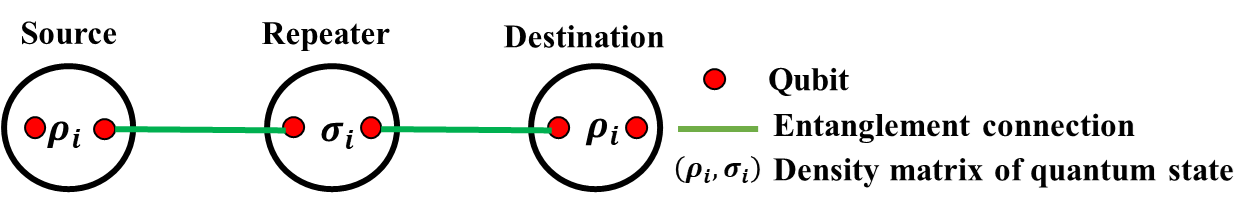}
    \caption{Before entanglement connection}
    \end{subfigure}
    \begin{subfigure}[b]{0.40\textwidth}
    \centering
    \includegraphics[width=\textwidth]{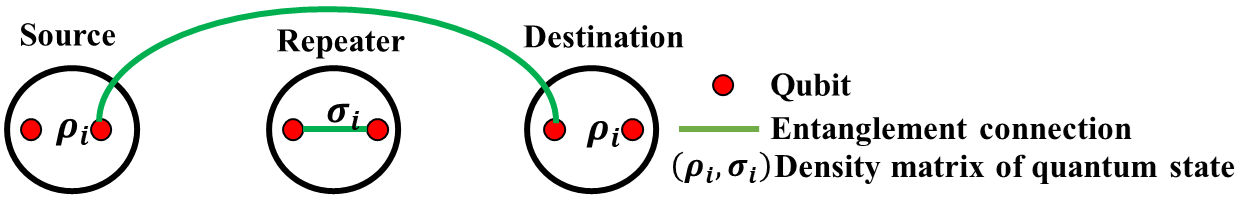}
    \caption{After entanglement connection}
    \end{subfigure}
     \caption{llustration of information transmission in a quantum network, where $\rho_i$ is the density matrix of quantum state at the source node and $\sigma_i$ is the density matrix of the quantum state received at adjacent node respectively.}
    \label{fig.fig1}
\end{figure}

Reliable long-distance quantum information transmission is a fundamental requirement for many quantum applications. Many existing works \cite{caleffi2020quantum,pirandola2019end,pant2019routing,patil2022entanglement,caleffi2017optimal,santos2023shortest} focus on long-distance quantum entanglement routing protocols. These protocols aim to establish end-to-end entanglement through quantum repeaters via entanglement swapping \cite{zukowski1993event} in a quantum network, improving network throughput, enhancing robustness, and serving more users. However, due to the fragile nature of quantum information, qubits are susceptible to decoherence \cite{schlosshauer2019quantum} through interactions with the environment. For example, the generated entangled pairs may not be perfectly entangled, and attenuation in physical links and imperfect swapping operations can lead to corruption during the establishment of long-distance entanglement. As a result, the established end-to-end entanglement may not be at the desired states and cannot be used for reliable quantum information transmission. Typically, fidelity \cite{li2022fidelity} is used to quantify the quality of an entanglement link. The value of fidelity ranges from 0 to 1 and measures how well a quantum channel preserves quantum information. Despite some recent works \cite{cacciapuoti2019quantum,zhao2022e2e} considering fidelity guarantees when designing entanglement routing protocols, it remains essential to explicitly verify the quality of entanglement links before transmitting important quantum information.

Closeness centrality enhances entanglement distribution in quantum networks by identifying nodes with the shortest paths to all other nodes, reducing communication delays, and optimizing resource allocation. High-centrality nodes serve as hubs, minimizing the need for intermediate entanglement swaps and lowering the risk of errors or decoherence. 
This results in more reliable and efficient paths for distributing entanglement, helping preserve the fidelity of quantum states.
By optimizing path selection, closeness centrality enhances the network’s resilience and speed, enabling more reliable communication and efficient resource management across the quantum network. However, an important metric for evaluating the integrity and quality of quantum information is the trace-distance between quantum states, which is rarely considered in existing entanglement routing designs. Quantum systems are inherently fragile and susceptible to decoherence \cite{schlosshauer2019quantum} and noise \cite{mor2024influence}, factors that can degrade the quality of quantum information. One can quantify the impact of these disturbances by comparing the trace-distance between the initial state and the received state. A low trace-distance signifies that the quantum state has been well-preserved, whereas a high trace-distance indicates significant alteration, highlighting areas where noise mitigation strategies are required.

Given these considerations, this paper focuses on handling noisy environments and maintaining end-to-end (E2E) fidelity within a general quantum network \cite{jia2021improved}. Additionally, we emphasize preserving information during quantum routing in the network. 
To achieve this, we propose a trace-distance based path purification (TDPP) routing algorithm that aims to achieve multiple objectives \cite{martins1984multicriteria}, including closeness centrality, information preservation, and E2E entanglement fidelity.
The primary objective is to identify entanglement paths from source to destination based on the closeness centrality of the node and perform trace-distance calculations to preserve information and make purification decisions. This involves analyzing the characteristics of the purification operation and proposing optimal decisions to maintain both information and E2E entanglement fidelity during quantum routing. While existing research addresses entanglement distribution design and minimum E2E fidelity requirements \cite{elliott2002building,shi2020concurrent}, it often overlooks trace-distance based purification, limiting improvements in the fidelity of individual EPR pairs. Our approach optimizes entanglement fidelity in routing, preserving information through strategic purification decisions, and leveraging purification operations to achieve improved E2E fidelity even with low-fidelity EPR pairs.
The novel contributions of our work are as follows:

\begin{enumerate}
    \item{We propose a trace-distance based approach that guarantees information preservation and reliable information flow in quantum networks involving multiple S-D pairs.}
    \item {To meet the demand for high-quality entanglement connections across various quantum applications, we introduce the first trace-distance based path purification (TDPP) technique. This architecture guarantees fidelity in E2E connections for S-D pairs within quantum networks.}
    \item{We conducted extensive simulations to showcase our algorithm's superior performance compared to other existing algorithms.}
\end{enumerate}

 The rest of the paper is organized as follows. Section \ref{related} discusses related works. Section \ref{NMCD} presents the network model, quantum properties, and quantum operations. Section \ref{QND} details the path selection, routing algorithm, and methods to preserve information based on the quantum state's static and dynamic measures. Performance evaluation analysis is described in Section \ref{PA}. Finally, Section \ref{conclusion} concludes the paper.


\begin{table*}[!t]
\centering
\caption{Related Works}
\begin{tabular}{l c c c c c c c c l}
\hline \hline
\rotatebox{0}{References} & \rotatebox{0}{Fidelity}   & \rotatebox{0}{Quantum} & \rotatebox{0}  {Quantum} & \rotatebox{0} {Entanglement} & \rotatebox{0}{Throughput}  & \rotatebox{0} {Static}   &  \rotatebox{0} {Dynamic} & \rotatebox{0}{Information} & \rotatebox{0}{Approaches}  \\ 
 &    &  \rotatebox{0} {Memory}   &  \rotatebox{0}{Channel}  &  \rotatebox{0}{Fidelity}  &   &  \rotatebox{0}{Measurement}    &  \rotatebox{0}{Measurement}  & \rotatebox{0} {Preservation} &   \\
 &    &    &  \rotatebox{0} {Capacity} &  \rotatebox{0}{Purification} &   &  \rotatebox{0}{of State}   &  \rotatebox{0}{of State} & &   \\ \hline
\cite{pant2019routing} & $\checkmark$ & $\times$ & $\times$ & $\times$ & $\times$ & $\times$ & $\times$ & $\times$ & Shortest-Path \\  
\cite{li2021effective} & $\checkmark$ & $\checkmark$ & $\times$ & $\times$ & $\times$ & $\times$ & $\times$ & $\times$ & ELP \\  
\cite{shi2020concurrent} & $\times$ & $\checkmark$ & $\checkmark$ & $\times$ & $\times$ & $\times$ & $\times$  & $\times$ & Q-CAST \\ 
\cite{zhao2021redundant} & $\times$ & $\checkmark$ & $\checkmark$ & $\times$ & $\checkmark$ & $\times$ & $\times$ & $\times$ & REPS\\  
\cite{zhao2022e2e} & $\checkmark$ & $\checkmark$ & $\checkmark$ & $\checkmark$ & $\checkmark$ & $\times$ & $\times$ & $\times$ & EFiARP\\ 
\cite{li2022fidelity} & $\checkmark$ & $\checkmark$ & $\checkmark$ & $\checkmark$ & $\checkmark$ & $\times$ &$\times$ & $\times$ & Q-PATH\\  
Our & $\checkmark$ & $\checkmark$ & $\checkmark$ & $\checkmark$ & $\checkmark$ & $\checkmark$ & $\checkmark$ & $\checkmark$ & TDPP \\  \hline \hline
\end{tabular}
\label{tab:related}\\
\end{table*}

\section{Related Works} \label{related}
Quantum network routing has been extensively studied \cite{yin2017satellite}. 
Previous work has focused on several key themes that are shown in Table \ref{tab:related}, including entanglement generation, fidelity purification, and network design protocols \cite{pant2019routing, kar2023routing}. This section provides a comprehensive review of relevant studies in quantum network routing and highlights the gaps our research aims to address.

Li et al. \cite{li2021effective} propose an efficient entanglement link path (ELP) protocol for remote entanglement generation in quantum networks. The ELP protocol utilizes a modularized and flexible routing scheme based on a tree structure to distribute entangled qubit pairs effectively across the network. It is evaluated in terms of network topologies, entanglement resources, and error models, demonstrating its suitability for quantum repeater networks, quantum key distribution, and distributed quantum computing. Mihir et al. \cite{pant2019routing} focus on developing the quantum internet, emphasizing the establishment of entangled qubit pairs between distant nodes. While their work primarily focuses on efficient entanglement generation within available resources, they propose protocols for quantum repeater nodes that leverage multiple paths to achieve higher entanglement rates compared to linear chains.

Shi et al. \cite{shi2020concurrent} explored concurrent entanglement routing in quantum networks. They proposed both source-concurrent and target-concurrent designs to optimize resource utilization and enhance robustness. Zhao et al. \cite{zhao2021redundant} introduced the redundant entanglement provisioning and selection (REPS) scheme, which provisions and selects redundant entangled pairs to enhance throughput and robustness. REPS adapts to network conditions by considering topology, fidelity, and resource availability. It provides backup resources for failure tolerance and flexible selection of successfully created entanglement links, contributing to efficient and reliable routing in quantum networks.

Zhao et al. \cite{zhao2022e2e} extended their previous work by introducing an E2E fidelity-aware routing and purification (EFiARP) algorithm to optimize high-fidelity entanglement distribution in quantum networks. They addressed limited throughput caused by noise and errors by considering entanglement fidelity in the optimization of routing and purification. EFiARP was designed to prepare candidate entanglement paths, determine optimal purification schemes, and select paths to maximize network throughput within resource constraints. Meanwhile, Li et al. \cite{li2022fidelity} proposed the fidelity guaranteed entanglement routing (Q-PATH) algorithm, ensuring a minimum fidelity for entangled pairs. Q-PATH utilized an iterative routing algorithm for single source-destination pairs and a greedy-based algorithm for multiple pairs, with considerations for resource allocation and re-routing.

In this study, we tackle the challenges associated with E2E fidelity and information preservation between S-D pairs in quantum networks when establishing connections on quantum links. 
As such, we introduce a novel routing metric based on trace-distance tailored for quantum networks, rather than relying on hop count {\cite{santos2023shortest, hahn2019quantum}}, as conventional approaches do.

\section{Background of Quantum Network and Communication} \label{NMCD}
This section encompasses two key components: the motivation behind our research and our approach to network model design. We delve into the concepts of state closeness i.e., trace-distance, dynamic measurement of a quantum state, and entanglement routing problems in quantum networks, offering an analysis of their properties.

\subsection{Network Topology and Notations}
A quantum network, denoted as $G$ = ($\mathcal{V}$, $E$), is composed
of a set of nodes $\mathcal{V}$ and a set of edges $E$, which represent
quantum entities interconnected by shared quantum channels.
For a given quantum network, each node $v$ $\in$ $\mathcal{V}$ is equipped with quantum memory.
Each edge $(u,v) \in E$ has a channel capacity $c_{u,v}$, which is defined as the number of EPR pairs shared between the adjacent quantum nodes $u$ and $v$.
Let $s_k$ and $d_k$ be the source node and destination node for the $k^{th}$ source-destination pair $(s_k,d_k)$, respectively.
Let $P^{SPF}_{minCost}$ be the set of shortest paths, and $P_i(s_k,d_k)$ be the $i^{th}$ shortest path for the source-destination pair $(s_k, d_k)$.
In the network, let $m_u$ represent the memory size at node $u$, and $(p_i, q_i)$ denote a probability distribution of the quantum state.
The trace-distance of the quantum state over an edge ($u, v$) is denoted as $D_{u,v}$($\rho$,\ $\sigma$), and the maximum trace-distance of an edge in the selected path from source $s_{k}$ to destination $d_{k}$ is $D_{s_k,d_k}^{max}$($\rho$,\ $\sigma$). The closeness centrality of a node $v$ is represented by $C_{v}$.
The quantum states of the node are represented by $\left|\psi\right\rangle$ and $\left|\varphi\right\rangle$.
The density matrices of quantum states $\left|\psi\right\rangle$ and $\left|\varphi\right\rangle$ are denoted as $\rho$ and $\sigma$, respectively, 
where $U$ and $V$ are unitary operators that perform quantum operations at the nodes, and $Z$ is the Pauli phase operator.
Let $\varepsilon$ be the quantum channel and $\rho_{\varepsilon}$ be the density matrix associated with the quantum channel.
The operation of the quantum channel on a state is represented by $\zeta(\left|\psi\right\rangle, \left\langle\psi|\right)$.
The fidelity over an edge $(u,v)$ is denoted as $F_{u,v}$, and the minimum fidelity of a quantum channel is $F_{min}\left(\varepsilon\right)$.
The fidelity of the quantum state over an edge is $F_{u,v}$($\rho$,\ $\sigma$).
The maximum fidelity of an edge in the selected path from source $s_{k}$ to destination $d_{k}$ is ${\hat{F}}_{s_k,d_k}^{SelEdge}$($\rho$,\ $\sigma$), and the purified fidelity of the edge with the maximum trace distance in the selected path from source $s_{k}$ to destination $d_{k}$ is $F_{D_{s_k,d_k}^{max}}^{purific}$($\rho$,\ $\sigma$).
The terminologies used in this paper are summarized in Table~\ref{tab:table1}.

\begin{table}[!t]
    \centering
    \caption{List of commonly used Variables and Notations}
    \label{tab:table1}
    \begin{tabular}{|l|l|}
         
         \hline 
          \textbf{Symbol} & \textbf{Description} \\  \hline 
          & \underline{Networks} \\
         $\left(s_k,d_k\right)$ & $k^{th}$ source-destination (S-D) pair \\
          $s_k$ & Source node of $k^{th}$ S-D pair \\
          $d_k$ & Destination node of $k^{th}$ S-D pair \\
         $P_{i}(s_k,d_k)$ & $i^{th}$ Path of S-D pairs $(s_k,d_k)$ \\
          $P_{minCost}^{SPF}$ & Set of shortest paths \\
          \textit{$c_{u,v}$} & Channel capacity over an edge ($u, v$)\\
          \textit{$m_{u}$} &Quantum memory size at node $u$ \\
          $C_{v}$ & Closeness centrality of node  \\
           \hline
           & \underline{Probability and Matrix} \\
          \textit{$p_i$, $q_i$} & Probability distribution of quantum state  \\
        $\rho$, $\sigma$ & Density matrix  \\
        $\rho_{\varepsilon}$ & Density matrix of quantum channel \\
         \hline
         & \underline{State, Channel and Operators} \\
        $\left|\psi\right\rangle$, $\left|\varphi\right\rangle$ & Quantum state of node  \\
         $\left|m\right\rangle$ & Arbitrary quantum state \\
         $\varepsilon$ & Quantum channel \\
          $\zeta(\left|\psi\right\rangle, \left\langle\psi|\right)$ & Quantum operation on a channel \\
        $U$, $V$ & Unitary operators \\
         \textit{Z} & Pauli phase operator \\
         \hline
         & \underline{Distance} \\
          $D_{u,v}$($\rho$,\ $\sigma$) & Trace-distance of quantum state over an edge (\textit{u},\textit{v}) \\
        $D_{s_k,d_k}^{max}$($\rho$,\ $\sigma$) & Maximum trace-distance of an edge in the selected \\ & path from $s_k$ to $d_k$ \\
         \hline
         & \underline{Fidelity} \\
       $F_{u,v}$ & Fidelity over an edge ($u, v$) \\
       $F_{min}\left(\varepsilon\right)$ & Minimum fidelity of quantum channel $\varepsilon$ \\
         $F_{u,v}$($\rho$,\ $\sigma$) & Fidelity of the quantum state over an edge (\textit{u},\textit{v})  \\
         $\hat{F}(\rho,\varepsilon(\rho))$ & Fidelity of quantum state  \\
         ${\hat{F}}_{s_k,d_k}^{SelEdge}$($\rho$,\ $\sigma$) & Maximum fidelity of an edge in the selected path \\ & from $s_k$ to $d_k$ \\
         $F_{D_{s_k,d_k}^{max}}^{purific}$($\rho$,\ $\sigma$) & Purified fidelity of the edge with maximum trace- \\ & distance in the selected path from $s_k$ to $d_k$ \\
          \hline 
    \end{tabular}
    
\end{table}

\emph{1) Quantum nodes:} Every quantum node serves as a processing unit for quantum information. Each node possesses the capability to generate, store, transmit, and manipulate quantum states. The establishment of link-level entanglement between any two neighboring quantum nodes is facilitated through the process of entanglement generation. Each quantum node is furnished with a finite set of quantum memory units \cite{li2020hyperfine} and the requisite hardware for executing quantum operations, such as entanglement swapping. Quantum nodes encompass various entities, primarily quantum end nodes, quantum repeaters, and quantum routers. Quantum end nodes play a pivotal role in teleporting unknown quantum bits and processing quantum information to support diverse quantum applications \cite{cacciapuoti2020entanglement}. Quantum repeaters, on the other hand, focus on extending the reach of entanglement distribution. Acting as networking devices, quantum routers interconnect numerous quantum end nodes and direct each entanglement routing request to the intended destination node. The collaborative efforts of quantum routers and quantum repeaters facilitate the seamless interconnection of multiple quantum end nodes within a quantum network.

\emph{2) Quantum repeaters:} In quantum networks, establishing connections between non-adjacent nodes is a challenging task that necessitates using quantum repeaters \cite{van2008system} for long-distance entanglement sharing via quantum swapping. Quantum repeaters connect with other repeaters and nodes through the classical internet, enabling information exchange over a channel \cite{pirandola2017fundamental}. Due to their shared functionality, quantum nodes and repeaters are considered network nodes.

\emph{3) Quantum links:} In quantum networks, a quantum link that connects two neighboring quantum nodes serves as the conduit for the distribution of EPR pairs, essentially functioning as a quantum channel. Quantum links exist in two main forms: optical fiber and free space. Both variants of quantum links inherently suffer from losses and decoherence \cite{orlowski2010sndlib}, resulting in the success probability of entanglement distribution between adjacent quantum nodes exponentially diminishing with the physical length of the quantum link \cite{van2013designing}. Consequently, two neighboring quantum nodes are compelled to undertake multiple entanglement distribution attempts to establish entanglement over a quantum link. Additionally, quantum memory can assist a quantum link in retaining multiple parallel EPR pairs.

\emph{3) Quantum memory:} Over the past few decades, there has been extensive exploration into various storage schemes for quantum memory \cite{lvovsky2009optical}. Advances in quantum memory technologies have led to practical improvements in coherence time, fidelity, and efficiency. Due to the low success probability of entanglement distribution between adjacent quantum nodes and the impressive $99.5$\% storage fidelity of quantum memory \cite{liu2020demand}, we have embraced a continuous model. This model entails the sharing of EPR pairs between adjacent quantum nodes before path selection, effectively addressing entanglement routing requests. Furthermore, the design of quantum memory allows for the creation of a composite of several independent and accessible memory units. This feature enables the assignment of a unique identity to each stored entanglement in memory, presented in the form of shared EPR pairs. This distinctive identification ensures the correct establishment of end-to-end entanglement between each S-D pair.

Managing the quantum network entails defining the quantum processor, channel, and repeaters. Classical networks connect quantum nodes, each equipped with computational and storage capabilities. The network operates synchronously, with activities segmented into discrete time slots \cite{li2022fidelity}. At the network layer, a centralized controller oversees network management. Nodes can report and update the data, such as fidelity and trace-distance, within 
the controller's records \cite{vardoyan2019stochastic}. The entanglement routing process includes: (i) Paired nodes generate entangled pairs, and the controller compiles these routing requests \cite{gyongyosi2019adaptive} and updates to each node's routing table. (ii) The controller then identifies routing paths and allocates resources utilizing a routing algorithm. Owing to resource constraints and multiple requests, some requests may be declined. (iii) Following the controller's directives, nodes carry out purification and swapping actions to establish multi-hop entanglement connections for the requested pairs.

\subsection{Quantum Trace-Distance Measure and Operations}
In quantum networks, to preserve information and establish E2E entanglement, we need to consider four unique operations: static, i.e., trace-distance and measurement of the quantum state, entanglement purification, and entanglement swapping. These operations have no direct equivalent in classical networking.

\emph{1) Static measurement of quantum state:} The static measure of a quantum state, also known as trace-distance, quantifies the distinguishability between two quantum states. It measures quantum information between quantum states and holds significant importance in quantum information theory \cite{xiang2007entanglement}. A smaller value of trace-distance indicates similarity between quantum states. In comparison, a larger value of trace-distance indicates greater distinguishability, which means the quantum state received at a destination node is different from the source node's quantum state. In a quantum network, the closeness of quantum states between adjacent nodes is quantified by the trace-distance \cite{nielsen2002quantum}, defined as:
\begin{equation}\label{eq1}
D_{u,v}\left(\rho, \sigma\right) \equiv \frac{1}{2} \operatorname{\textit{tr}}\left|\rho-\sigma\right|.
\end{equation} 
The trace-distance between two quantum states is denoted as $D_{u, v}$, where $\rho$ and $\sigma$ are the density matrices representing the quantum states at nodes $u$ and $v$ in a quantum network respectively. Where $\rho=\sum_{i}{p_i\left.\left|\psi_i\right.\right\rangle\left\langle\left.\psi_i\right|\right.}$ and  $\sigma=\sum_{i}{q_i\left.\left|\varphi_i\right.\right\rangle\left\langle\left.\varphi_i\right|\right.}$.
In the density matrix, $\left|\psi\right\rangle$ and $\left|\varphi\right\rangle$ represent the quantum state with a probability of $p_i$ and $q_i$, respectively. 
$tr$ is the trace operation, which is the sum of the diagonal elements of the matrix.
$\left|\rho-\sigma\right|$ refers to the positive semidefinite matrix, which is the absolute value of the difference between $\rho$ and $\sigma$. It is found by taking the matrix difference $\rho-\sigma$, and then computing its eigenvalues.
The factor of $\frac{1}{2}$ (\textit{see} Appendix~\ref{appendix1}) ensures that the trace-distance between the normalized density matrices takes values in the range [0,1], i.e., $0\le D_{u,v}\le1$ \cite{breuer2009measure}. This occurs when the two states $\left|\psi\right\rangle$ and $\left|\varphi\right\rangle$ are orthogonal, meaning they are completely distinguishable. Density matrices allow the description of mixed states, where the system is in a statistical mixture of noisy quantum states.

\emph{2) Closeness centrality:}
Closeness centrality is a measure used in network analysis to determine the importance of a node based on its proximity to all other nodes in the network. In a quantum network, a node with high closeness centrality using trace-distance can interact with or distribute entanglement to other nodes with minimal quantum state degradation. This centrality facilitates faster entanglement distribution. Routing algorithms can identify paths that minimize overall distance or communication cost by focusing on nodes with higher closeness centrality, leading to more efficient network performance. Additionally, closeness centrality allows for more effective resource allocation in quantum networks with limited resources such as qubits and entangled pairs. Nodes with higher closeness centrality are better positioned to maintain high-fidelity connections, ensuring optimal network operation.

\emph{3) Entanglement swapping:} 
To establish long-distance entanglement connections in a distributed quantum network, entanglement swapping is commonly used \cite{van2008system}. It converts one-hop entanglements into direct entanglements between non-adjacent nodes using a quantum repeater, as shown in Figure \ref{fig.fig1}. However, imperfect measurements during swapping can introduce noise and degrade the entanglement quality \cite{yen1971finding}. Furthermore, the varying fidelity of entangled pairs across different quantum channels requires different routing paths for end-to-end connections after entanglement swapping.

{\emph{4) Entanglement purification:} In practical quantum networks, directly establishing high-fidelity entanglement between adjacent nodes is challenging due to noise. This process improves link quality in quantum networks by employing controlled nodes and heralding stations \cite{dur2003entanglement,li2024survey}. Low-fidelity entangled pairs, such as EPR pairs, are purified using entanglement purification, typically involving CNOT gates or beam splitters \cite{dahlberg2019link}. The goal is to maximize the fidelity of entangled states by purifying EPR pairs. The fidelity of the resulting purified state is described in Lemma \ref{lemma1}.

\begin{lemma}\label{lemma1}
If in a quantum network, as shown in Figure \ref{fig.fig1}, each node has a density matrix, denoted as $\rho_i$ for quantum state $\left|\psi\right\rangle$ and $\sigma_i$ for quantum state $\left|\varphi\right\rangle$,
then the fidelity between $\rho_i$ and $\sigma_i$ is defined as the maximum value of the overlap $\left|\left\langle\psi\middle|\varphi\right\rangle\right|$, taken over all possible purification $\left|\psi\right\rangle$ of $\rho_i$ and $\left|\varphi\right\rangle$ of $\sigma_i$ can be expressed as:
\begin{equation}\label{eq12}
F\left(\rho_i,\sigma_i\right)=\max_{\left.|\psi\right\rangle,\left.|\varphi\right\rangle}{\left.|\left\langle\psi|\right.\varphi\right\rangle|}.
\end{equation}
\end{lemma}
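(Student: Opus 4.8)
The plan is to establish \eqref{eq12} as Uhlmann's theorem, proving that the purification-based overlap on its right-hand side coincides with the standard trace-norm fidelity $F(\rho_i,\sigma_i)=\operatorname{tr}\sqrt{\sqrt{\rho_i}\,\sigma_i\,\sqrt{\rho_i}}$. First I would fix canonical reference purifications. Writing the spectral decomposition $\rho_i=\sum_j p_j\left|j\right\rangle\!\left\langle j\right|$, a purification into an ancilla space $B$ is $\left|\psi_0\right\rangle=(\sqrt{\rho_i}\otimes I)\left|m\right\rangle$, where $\left|m\right\rangle=\sum_j\left|j\right\rangle_A\left|j\right\rangle_B$ is a fixed maximally correlated vector; taking the partial trace over $B$ recovers $\sqrt{\rho_i}\,\sqrt{\rho_i}^{\dagger}=\rho_i$, confirming it is a valid purification, and analogously $\left|\varphi_0\right\rangle=(\sqrt{\sigma_i}\otimes I)\left|m\right\rangle$ purifies $\sigma_i$.

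Next I would exploit the unitary freedom in purifications: any two purifications of the same state differ by a unitary acting only on the ancilla, so the general purifications are $\left|\psi\right\rangle=(I\otimes U)\left|\psi_0\right\rangle$ and $\left|\varphi\right\rangle=(I\otimes V)\left|\varphi_0\right\rangle$, and the maximisation in \eqref{eq12} is really a maximisation over ancilla unitaries $U,V$. The third step collapses the inner product into a single trace. Using the identity $(I\otimes M)\left|m\right\rangle=(M^{T}\otimes I)\left|m\right\rangle$ together with $\left\langle m\right|(N\otimes I)\left|m\right\rangle=\operatorname{tr}(N)$, the overlap reduces to $\left|\left\langle\psi\middle|\varphi\right\rangle\right|=\left|\operatorname{tr}\!\left(\sqrt{\sigma_i}\,\sqrt{\rho_i}\,W\right)\right|$, where the single unitary $W$ absorbs $U$, $V$ and the transpose factors. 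The key observation is that as $U,V$ range over all unitaries, the product $W$ also ranges surjectively over all unitaries, so the two-variable maximisation is equivalent to a maximisation over a single unitary $W$.

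The final step is the optimisation, which is where the main difficulty lies. I would invoke the operator inequality $\left|\operatorname{tr}(AW)\right|\le\operatorname{tr}\left|A\right|$, valid for every operator $A$ and unitary $W$, and establish its achievability: with the polar decomposition $A=U_{P}\left|A\right|$ (where $\left|A\right|=\sqrt{A^{\dagger}A}\ge 0$ and $U_{P}$ is unitary), the choice $W=U_{P}^{\dagger}$ attains equality by cyclicity of the trace. Setting $A=\sqrt{\sigma_i}\,\sqrt{\rho_i}$ gives $A^{\dagger}A=\sqrt{\rho_i}\,\sigma_i\,\sqrt{\rho_i}$, so $\max_{W}\left|\operatorname{tr}(AW)\right|=\operatorname{tr}\left|A\right|=\operatorname{tr}\sqrt{\sqrt{\rho_i}\,\sigma_i\,\sqrt{\rho_i}}=F(\rho_i,\sigma_i)$, which is exactly the claimed maximum. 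The hard part is the degenerate case: when $\rho_i$ or $\sigma_i$ is rank-deficient, $\sqrt{\rho_i}$ is singular and the polar factor $U_{P}$ is no longer uniquely determined on the kernel, so the equality-attaining $W$ need not be unique; I would resolve this by a continuity argument, replacing $\rho_i$ with $\rho_i+\epsilon I$, applying the full-rank result, and letting $\epsilon\to 0$ while invoking continuity of both the trace norm and the overlap. A technical prerequisite I would also check is that the ancilla dimension is taken at least as large as the ranks involved, so that every purification is genuinely representable in the enlarged space.
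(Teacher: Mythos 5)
Your proposal is correct, and it follows the same skeleton as the paper's proof — canonical purifications built from a maximally correlated vector $\left|m\right\rangle$, reduction of the overlap $\left|\left\langle\psi\middle|\varphi\right\rangle\right|$ to a single trace $\left|\operatorname{tr}\left(\sqrt{\rho_i}\sqrt{\sigma_i}\,W\right)\right|$ with one absorbed unitary $W$ — but it diverges at exactly the step where the real work happens, and your version is the rigorous one. The paper finishes by invoking ``Cauchy--Schwarz'' to assert $\left|\operatorname{tr}\left(\sqrt{\rho}\sqrt{\sigma}\,U\right)\right|\le\left|\operatorname{tr}\left(\sqrt{\rho}\sqrt{\sigma}\right)\right|$ and concludes $F\left(\rho_i,\sigma_i\right)=\operatorname{tr}\left(\sqrt{\rho_i}\sqrt{\sigma_i}\right)$ in \eqref{eq-fidelity}; that inequality is false in general (maximizing over unitaries yields $\operatorname{tr}\left|\sqrt{\rho}\sqrt{\sigma}\right|$, which strictly exceeds $\left|\operatorname{tr}\left(\sqrt{\rho}\sqrt{\sigma}\right)\right|$ whenever $\sqrt{\rho}\sqrt{\sigma}$ is not positive semidefinite, e.g., for non-commuting states), and the paper never addresses attainability, which is essential since \eqref{eq12} asserts an equality with a maximum, not just an upper bound. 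Your route — the sharp bound $\left|\operatorname{tr}(AW)\right|\le\operatorname{tr}\left|A\right|$ with equality achieved at $W=U_P^{\dagger}$ from the polar decomposition $A=U_P\left|A\right|$, applied to $A=\sqrt{\sigma_i}\sqrt{\rho_i}$ — is the standard Uhlmann argument and delivers the correct closed form $F\left(\rho_i,\sigma_i\right)=\operatorname{tr}\sqrt{\sqrt{\rho_i}\,\sigma_i\sqrt{\rho_i}}$, consistent with the paper's own channel-fidelity formula \eqref{eq14a} (which \eqref{eq-fidelity} is not, in general). Your additional care with rank-deficient states via the $\rho_i+\epsilon I$ regularization and with the ancilla dimension fills genuine gaps the paper glosses over; what the paper's shortcut buys is only brevity, and its formula agrees with yours precisely in the commuting (e.g., classical or diagonal) case to which its numerical examples implicitly restrict.
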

\begin{proof}
For orthonormal bases, the source and destination states are represented as $\left|\psi_A\right\rangle$ on the source side and $\left|\varphi_B\right\rangle$ on the destination side. The source and destination are on the same channel, and the index may be assumed to run over the same set of values. Let's define an arbitrary quantum state 
 $\left|m\right\rangle$ = $\sum_{i}\left|\psi_A\right\rangle \left|\varphi_B\right\rangle$ and state  $\left|\psi\right\rangle$ be a purification of $\rho$. 
By Schmidt decomposition \cite{nielsen2002quantum}, we write the expression as, $\left|\psi\right\rangle$ = $\left(U_A\bigotimes\sqrt\rho U_B \right)$ $\left|m\right\rangle$ for some unitary operators $U_A$ and $U_B$ on system. 
Similarly, if state $\left|\varphi\right\rangle$ is any purification of $\sigma$ then there exist unitary operators $V_A$ and $V_B$ such that $\left|\varphi\right\rangle$ = $\left(V_A\bigotimes\sqrt\sigma V_B\right)$$\left|m\right\rangle$ so, taking the inner product of both state's values, we get:
\begin{equation}\label{eq12}
\left|\left\langle\psi\middle|\varphi\right\rangle\right|=\left|\left\langle\left.m\right|\right.\left(U_A^+V_A\bigotimes\ U_B^+\sqrt\rho\sqrt\sigma V_B\right)\left|\left.m\right\rangle\right.\right|.
\end{equation}
By applying the Hilbert-Schmidt inner product \cite{nielsen2002quantum} and entanglement on (\ref{eq12}), we get:
\begin{equation*}
    tr(A^+B)= \left|\left\langle m\right.\right|\left(A\bigotimes\ B\right)\left|\left.m\right\rangle\right|,
\end{equation*}
\begin{equation*}
    \left|\left\langle\psi\middle|\varphi\right\rangle\right|=\ \left|tr\left(U_AV_A^+\bigotimes\ U_B^+\sqrt\rho\sqrt\sigma V_B\right)\right|,                                              
\end{equation*}
\begin{equation*}
   \left|\left\langle\psi\middle|\varphi\right\rangle\right| =\left|tr\left({U_AV_A^+U_B^+\sqrt\rho\sqrt\sigma}V_B\right)\right|,
\end{equation*}
where $A^+$= $U_A$$V_A^+$, $B$ = $U_B^+\sqrt\rho\sqrt\sigma V_B$, and $U$= $V_B$$V_A^+$$U_A$$U_B^+$. Now we can rewrite the expression (\ref{eq12}) as:
\begin{equation} \label{eq13}
   \left|\left\langle\psi\middle|\varphi\right\rangle\right| = \left|tr\left(\sqrt\rho\sqrt\sigma U\right)\right|. 
\end{equation}
By using Cauchy-Schwarz inequality \cite{nielsen2002quantum}, for the Hilbert-Schmidt inner product on (\ref{eq13}), we get:
\begin{equation*}    \left|\left\langle\psi\middle|\varphi\right\rangle\right|\le\left|tr\left(\sqrt\rho\sqrt\sigma\right)\right|.
\end{equation*}
Hence, the fidelity between the states $\rho_i$ and $\sigma_i$ is:
\begin{equation} \label{eq-fidelity}
F\left(\rho_i,\sigma_i\right)=tr\left(\sqrt{\rho_i}\sqrt{\sigma_i}\right).
\end{equation}
\end{proof}

There are various entanglement purification techniques,  including symmetric purification \cite{dur1999quantum}, banded purification \cite{bassoli2021quantum}, and entanglement pumping \cite{van2008system}, among others. In this paper, we are using the entanglement pumping technique to purify EPR pairs. In this process, initially, the quantum nodes generate low-fidelity entangled EPR pairs. These pairs are probabilistically combined through entanglement purification using \eqref{eq-fidelity}, producing a higher-fidelity pair. The fidelity of an EPR pair is increased by purifying it with base-level pairs created via the physical entanglement mechanism. This iterative process continues until the desired higher fidelity of the EPR pairs is achieved. An example of the entanglement pumping method is shown in Figure \ref{pump}. Over multiple rounds, this pumping technique incrementally improves the entanglement quality, allowing the fidelity between quantum states $\left|\psi\right\rangle$ and $\left|\varphi\right\rangle$ to approach the maximum value. The pumping process continues until the desired fidelity threshold\footnote{The target fidelity level set for entangled quantum states, particularly EPR pairs, during quantum communication or quantum computing processes. In this paper, we use a fidelity threshold of $0.80$.} is achieved, at which point it stops. This recursive approach ensures that, despite noise, high-fidelity entanglement can be reliably established across quantum networks, enabling robust quantum communication.

\begin{figure}[!t]
\centering
    \includegraphics[width=0.40\textwidth]{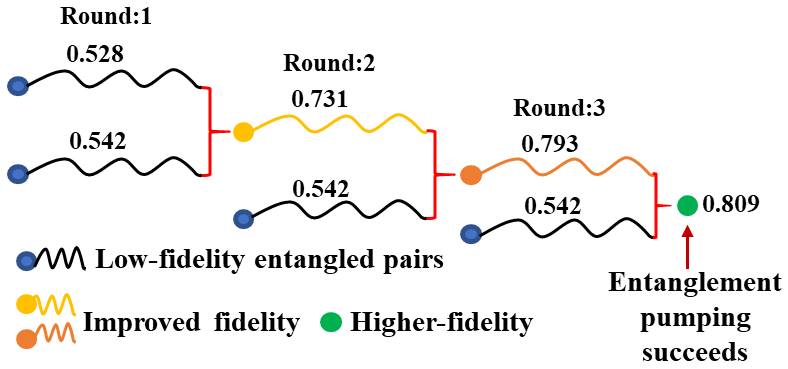}
    \caption{Entanglement purification using pumping process. In this process, quantum nodes initially generate low-fidelity EPR pairs, such as $\rho_i$ = $0.528$ and $\sigma_i$ = $0.548$. Once these pairs are generated, we apply the entanglement pumping process, using \eqref{eq-fidelity}, to purify the fidelity to $0.731$ in the first round. In the second round, we combine the newly purified fidelity with the base fidelity of $0.548$ to further purify the fidelity to $0.793$. This process is repeated in the next round, the fidelity is purified to $0.809$ which is greater than the desired fidelity threshold.}
    \label{pump}   
\end{figure}

\emph{6) Dynamic measurement of quantum state:} The dynamic measurement of a quantum state involves measuring a quantum system over time while considering the preservation of information. The environmental loss, imperfect swapping, and quantum gate errors cause a loss in fidelity. This is crucial for maintaining coherence and enabling accurate information processing in quantum systems.
The fidelity of the quantum channels is calculated based on the initial quantum state $\left|\psi\right\rangle$ of the nodes.
Quantum unitary operations are applied to protect the quantum state during communication \cite{xiang2007entanglement}.
The cumulative effect of these losses on the density matrix of the quantum state and fidelity of these states transferred over a quantum channel
$\varepsilon$ as:
%
\begin{equation}\label{eq3.1}
\rho_\varepsilon =\rho_{depol}\left(\rho_{dephase}\left(\rho_{fiber}\left(\rho\right)\right)\right),
\end{equation}
where $\rho_{\varepsilon}$ it represents the density matrix of quantum channel.
The quantum unitary operation minimizes fidelity loss by evaluating all possible initial states $\left|\psi\right\rangle$, thereby aiding information preservation.
This minimization is denoted as:
\begin{equation}\label{eq3}
F_{min}\left(\varepsilon\right)\equiv \min
[F\left(\left|\left.\psi\right\rangle\right.,\zeta\left(\left|\left.\psi\right\rangle\left\langle\left.\psi\right|\right.\right.\right)\right)].
\end{equation}
Where $F_{min}\left(\varepsilon\right)$ is the minimum fidelity of the quantum channel and it is designed to mitigate the loss in a quantum channel, ensuring information preservation and E2E entanglement fidelity within a quantum network.

\section{Quantum Network Routing Design} \label{QND}
The section introduces the trace-distance based path purification (TDPP) algorithm, which aims to establish E2E entanglement connections while preserving information in quantum networks. It identifies routing solutions for multiple S-D pairs in network $G$. TDPP selects entanglement paths with at most one purification link and maintains a trace-distance value of $0.60$ or higher. The goal is to safeguard information and maintain E2E entanglement fidelity in quantum networks \cite{xiang2007entanglement}.

\subsection{Information Preservation Quantification in Quantum Links}\label{IP}
We focused on preserving quantum state information and E2E entanglement fidelity between selected S-D pairs by performing dynamic operations on the quantum links using (\ref{eq1}) and (\ref{eq3}). The quantum operation is applied in real-time along the selected routing path $s\rightarrow r_2\rightarrow r_3\rightarrow d$, as shown in Figure \ref{fig:my_label_3}(c), when a quantum state with density matrix $\rho$ or $\sigma$ interacts with the adjacent nodes. Environmental noise during quantum communication can cause significant information loss through the quantum link. 
To address this problem, we perform a quantitative analysis to preserve information and entanglement fidelity.
We use Theorem \ref{theorem1} for our analysis showing that dynamically measuring the quantum state can effectively preserve information during communication in the quantum network.

While transmitting the state $\left|\psi\right\rangle$ from a source to a destination via a quantum channel, the discrete actions of the channel were governed by quantum operations, such as quantum state measurement or unitary operations on the quantum state, as described in \eqref{eq3}.
It is acknowledged that no quantum channel is completely flawless.
In practice, it is possible to measure information in real-time over a quantum communication channel even without knowing the specific details of the quantum state $\left|\psi\right\rangle$.
The behavior of the quantum channel could be quantified by E2E fidelity, illustrating that information is maintained during quantum communication. We analyzed an information preservation process on selected S-D pairs in Theorem \ref{theorem1} as follows:

\begin{theorem}
\label{theorem1}
In the TDPP algorithm, the selected path has an index set $X_i=$\{$p_i$, $q_i$\}, where $p_i$ and $q_i$ are the probability distribution of the quantum state at the source and destination node, associated with density matrix of the quantum state $\rho_i$ and $\sigma_i$, respectively, so the fidelity of the quantum channel can be expressed as:
\begin{equation}\label{eq14}
F\left(\sum_{i}{p_iq_i\ ,\sum_{i}{\rho_i\sigma_i}}\right)\geq\sum_{i}\sqrt{p_iq_i}F\left(\rho_i,\sigma_i\right).
\end{equation}
\end{theorem}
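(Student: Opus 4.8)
The plan is to establish \eqref{eq14} as the \emph{joint concavity} (strong concavity) of fidelity, interpreting the left-hand side as the fidelity $F\!\left(\sum_i p_i\rho_i,\ \sum_i q_i\sigma_i\right)$ between the averaged source state $\sum_i p_i\rho_i$ and the averaged destination state $\sum_i q_i\sigma_i$. The engine of the proof is the purification characterization already proved in Lemma~\ref{lemma1}, namely that $F(\rho_i,\sigma_i)$ equals the maximum overlap $|\langle\psi|\varphi\rangle|$ taken over all purifications of $\rho_i$ and $\sigma_i$. The strategy is to assemble one pair of purifications for the two \emph{mixtures} out of the individual optimal purifications, and then read a lower bound directly off their inner product.

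First I would, for each index $i$, use Lemma~\ref{lemma1} (Uhlmann's theorem) to pick purifications $|\psi_i\rangle$ of $\rho_i$ and $|\varphi_i\rangle$ of $\sigma_i$ in a common system-plus-reference space, adjusting the free global phase on each so that the overlap is real, nonnegative, and saturates the fidelity, $\langle\psi_i|\varphi_i\rangle = F(\rho_i,\sigma_i)$. Next I would introduce an auxiliary index register with orthonormal basis $\{|i\rangle\}$ and define the combined states
\[
|\Psi\rangle = \sum_i \sqrt{p_i}\,|\psi_i\rangle|i\rangle, \qquad |\Phi\rangle = \sum_i \sqrt{q_i}\,|\varphi_i\rangle|i\rangle .
\]
The key structural step is to verify that these are legitimate purifications of the target mixtures: because the $\{|i\rangle\}$ are mutually orthogonal, the off-diagonal contributions vanish under the partial trace over the reference-and-index system, leaving exactly $\sum_i p_i\rho_i$ and $\sum_i q_i\sigma_i$.

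Finally, since $F$ of the two mixtures is the \emph{maximum} of $|\langle\cdot|\cdot\rangle|$ over \emph{all} purifications, it dominates the particular value obtained from $|\Psi\rangle$ and $|\Phi\rangle$. Exploiting orthogonality of $\{|i\rangle\}$ to collapse the double sum to its diagonal gives $\langle\Psi|\Phi\rangle = \sum_i \sqrt{p_i q_i}\,\langle\psi_i|\varphi_i\rangle = \sum_i \sqrt{p_i q_i}\,F(\rho_i,\sigma_i)$, which is already real and nonnegative by the phase alignment. Combining $F\!\left(\sum_i p_i\rho_i,\ \sum_i q_i\sigma_i\right) \ge |\langle\Psi|\Phi\rangle|$ with this identity yields the claimed inequality \eqref{eq14}.

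I expect the main obstacle to be not the algebra but the two legitimacy conditions that make the Uhlmann bound applicable: confirming that $|\Psi\rangle$ and $|\Phi\rangle$ genuinely purify the averaged states over a \emph{shared} reference space, so that the maximization in Lemma~\ref{lemma1} ranges over a set that actually contains this pair, and fixing the per-index phases consistently so that the resulting lower bound appears as the stated positive sum rather than a smaller quantity in modulus. A secondary point worth stating carefully is that the inequality is one-directional precisely because $|\Psi\rangle,|\Phi\rangle$ need not be the \emph{optimal} joint purifications of the two mixtures.
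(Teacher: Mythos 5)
Your proof is correct, and it is a genuinely different argument from the one in the paper. You prove the strong-concavity inequality \eqref{eq14} (reading its left-hand side, as one must, as $F\bigl(\sum_i p_i\rho_i,\ \sum_i q_i\sigma_i\bigr)$) by the standard Uhlmann construction: take phase-aligned optimal purifications $\left|\psi_i\right\rangle,\left|\varphi_i\right\rangle$ from Lemma~\ref{lemma1}, tensor on an orthonormal index register to build $\left|\Psi\right\rangle=\sum_i\sqrt{p_i}\left|\psi_i\right\rangle\left|i\right\rangle$ and $\left|\Phi\right\rangle=\sum_i\sqrt{q_i}\left|\varphi_i\right\rangle\left|i\right\rangle$, check via the partial trace that these purify the two mixtures over a shared reference space, and then lower-bound the maximal overlap by this particular pair, collapsing the double sum by orthogonality of $\{\left|i\right\rangle\}$. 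The paper instead works from the channel-fidelity formula \eqref{eq14a}, performs an algebraic rewriting of the mixture fidelity, and then invokes Jensen's inequality \eqref{jen} together with a cited concavity property of fidelity to conclude. The comparison favors your route: the paper's intermediate rewriting (its equation with ratios such as $\sqrt{p_i}\rho_i/\sqrt{p_i q_i}$) is not a valid identity between fidelities of density operators, and appealing to concavity of $F$ in one argument via Jensen only yields the weaker single-argument concavity bound, not the joint (strong) concavity with weights $\sqrt{p_i q_i}$ that \eqref{eq14} asserts; so the paper's derivation is essentially an appeal to the known result with nonrigorous algebra in between. Your purification argument is self-contained, derives the correct $\sqrt{p_i q_i}$ weights directly from the inner product $\left\langle\Psi\middle|\Phi\right\rangle$, and correctly identifies why the inequality is one-directional (your constructed pair need not be the optimal purifications of the mixtures); it also reuses Lemma~\ref{lemma1} as its only input, which fits the paper's own toolkit better than the external concavity citation does. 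The one point to flag explicitly if you write this up is the reinterpretation of the theorem's garbled notation $F\bigl(\sum_i p_i q_i,\ \sum_i\rho_i\sigma_i\bigr)$: your reading is the only one under which the objects are density operators and the claim is the standard strong concavity of fidelity.
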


\begin{proof} 
The fidelity of a quantum channel is a measure of how well a quantum channel transmits quantum states. It is defined as the overlap between the output state of the channel and the corresponding input state. Mathematically, the fidelity of a quantum channel for the input state $\rho$ and the output state $\sigma$ is defined as:
\begin{equation}\label{eq14a}
F\left(\rho,\varepsilon\left(\rho\right)\right)=tr\left[\sqrt{\sqrt{\varepsilon\left(\rho\right)}\rho\sqrt{\varepsilon\left(\rho\right)}}\right]^2,
\end{equation}
where ${tr}$ denotes the trace operator. 
We are considering a quantum channel in which the quantum states at the source and destination nodes are represented by $\rho_i$ and $\sigma_i$, respectively, with associated probabilities $p_i$ and $q_i$ for each path. The total fidelity across all paths can be expressed using (\ref{eq14a}) as:
\begin{equation}
\begin{aligned}\label{eq14b}
\resizebox{0.45\textwidth}{!}{$
   F\left(\sum_{i}{p_iq_i},\sum_{i}{\rho_i\sigma_i}\right) =F\left(\sum_{i}{\sqrt{p_i}\sqrt{q_i}}, \sum_{i}{\frac{\sqrt{p_i}\rho_i\sqrt{q_i}\sigma_i}{\sqrt{p_i\rho_i}\sqrt{q_i\sigma_i}}} \right)$}.
\end{aligned}
\end{equation}
In (\ref{eq14b}), the left-hand side represents the fidelity of a quantum channel,
while the right-hand side describes the fidelity of the quantum channel for mixed quantum states, expressed as the sum of the individual fidelities weighted by the square roots of their respective probabilities.
The Jensen's inequality \cite{jensen1906fonctions} 
states that if $f$ is a concave function and $p_i$ are non-negative weights such that $\sum_{i=1}^n p_i = 1$, then:
\begin{equation} \label{jen}
    f\left(\sum_{i=1}^{n}{p_ix_i}\right)\geq\sum_{i=1}^{n}{p_if\left(x_i\right)}.
\end{equation} 
By using (\ref{jen}), we can rewrite (\ref{eq14b}) and 
express the fidelity across the entire quantum channel as:

\begin{equation*} \label{fidelity}
    F\left(\sum_{i}{p_iq_i},\sum_{i}{\rho_i\sigma_i}\right) \geq \sum_{i}{\sqrt{p_iq_i}F\left(\frac{\sqrt{p_i}\rho_i}{\sqrt{p_iq_i}},\frac{\sqrt{q_i}\sigma_i}{\sqrt{q_i\sigma_i}}\right)}.
\end{equation*}
Now, by applying the concavity of the fidelity \cite{miszczak2008sub}, where the states $\rho_i$ and $\sigma_i$ are associated with the source and destination probability distributions $p_i$ and $q_i$, we obtain:
  \begin{equation*} 
  F\left(\sum_{i}{p_iq_i\ ,\sum_{i}{\rho_i\sigma_i}}\right)\geq\sum_{i}\sqrt{p_iq_i}F\left(\rho_i,\sigma_i\right).
  \end{equation*}
%
The concavity ensures that the overall fidelity is at least as large as the weighted sum of the individual fidelities. 
\end{proof}
The (\ref{eq14}) demonstrates the preservation of information and fidelity within the quantum network, ensuring that the quantum state remains intact following purification on a quantum link. It outlines the conservation of fidelity and information across quantum networks, spanning from E2E within the network. 
In the worst-case scenario, if the source node attempts to minimize the initial state $\left.|\psi\right\rangle$, the resulting fidelity of the link after this quantum state minimization process will be:
\begin{equation}\label{eq16}
F_{min}\left(\varepsilon\right)\equiv\min_{\left.|\psi\right\rangle}{F}\left(\left.|\psi\right\rangle,\zeta\left(\left.|\psi\right\rangle\left\langle\psi|\right.\right)\right),
\end{equation}
where $F_{min}\left(\varepsilon\right)$ denotes minimum fidelity of quantum channel. 
As an illustration, consider a quantum network employing a polar channel for entanglement generation across all quantum states $\left.|\psi\right\rangle$. In such a scenario, the minimized fidelity can be expressed as $F_{min}\left(\sqrt{1-\frac{p}{2}}\right)$, with $p$ representing the probability of the quantum state $\left.|\psi\right\rangle$. 
Now, in the context of a phase-damping channel under similar conditions, the fidelity is given by (\ref{neq}), as follows:
\begin{equation} \label{neq}   F\left(\left.|\psi\right\rangle,\zeta\left.|\psi\right\rangle\left\langle\psi|\right.\right) = \sqrt{p + (1-p)\langle\psi|Z|\psi\rangle^2},
\end{equation} 
where $p$ denotes a probability, and $Z$ represents a specific operator acting on the quantum state $\left.|\psi\right\rangle$.
For a pure quantum state, the right-hand side of the fidelity square term in (\ref{neq}) should always be non-negative. In light of this, we choose the minimum fidelity $F_{min}\left(\varepsilon\right)=\sqrt p$ across the quantum link. This minimization strategy aims to protect the information from potential channel noise, ensuring the overall preservation of the pure quantum state in the network using (\ref{eq16}).
Accordingly, 
we formulated the information preservation problem to maximize the fidelity of the quantum state as follows:

\begin{equation}
\text{Maximize:} \quad \hat{F}_{u,v}\left(\rho,\varepsilon\left(\rho\right)\right),  \\
\label{obj}
\end{equation}

\begin{subequations}
\text{subject to:}
\begin{equation}
\ c_{u,v} \geq 1, \ \forall \left(u,v\right) \in E,  \\
\label{obj1}
\end{equation}
\begin{equation}
 m_u \geq 1, \ \forall u \in \mathcal{V},  \\ 
\label{obj2}
\end{equation}
\begin{equation}
\sum_{u:\left(u,v\right) \in E}{F_{u,v}\left(\rho,\sigma\right)-\sum_{v:\left(v,u\right) \in E}{F_{v,u}\left(\sigma,\rho\right) = p_i,}} \ \forall u=s_k, \\
\label{obj3}
\end{equation}
\begin{equation}
\sum_{u:\left(u,v\right) \in E}{F_{u,v}\left(\rho,\sigma\right)- \sum_{v:\left(v,u\right) \in E}{F_{v,u}\left(\sigma,\rho\right) = q_i,}} \ \forall u=d_k, \\
\label{obj4}
\end{equation}
%
\begin{flalign}
\sum_{u:\left(u,v\right)\in E}{F_{u,v}\left(\rho,\sigma\right)- \sum_{v:\left(v,u\right) \in E}{F_{v,u}\left(\sigma,\rho\right)=0,}} \nonumber \\ \forall v \in  \mathcal{V}-\{s_k, d_k\}, 
\label{obj5}
\end{flalign}
%
\begin{equation}
F_{u,v}\left(\rho,\sigma\right)+F_{u,v}\left(\sigma,\rho\right)\ \le\ c_{u,v},\ \ \forall\left(u,v\right)\ \in\ E,
\label{obj6}
\end{equation} 
\begin{equation}
\sum_{i}{\sqrt{p_i,q_i}F_{u,v}}\left(\rho,\sigma\right) \le F_{u,v}\left(\sum_{i}{p_i,q_i}\right), \ \forall \left(u,v\right) \in E.  
\label{obj7}
\end{equation}
\end{subequations}

The objective of (\ref{obj}) is to maximize the fidelity of quantum state to preserve the quantum information and E2E entanglement fidelity, i.e., the E2E entanglement connection established between S-D pairs. 
The first two constraints, i.e., (\ref{obj1})-(\ref{obj2}), show that quantum nodes have a finite number of quantum channels and quantum memory, respectively. The next three constraints, i.e., (\ref{obj3})-(\ref{obj5}) are the flow conservation constraints for the source node, destination node, and any intermediate nodes, respectively, that are held in all routing problems. 
The inequality in (\ref{obj6}) shows that the number of entanglement link fidelity generated for E2E entanglement paths should not exceed the channel capacity.
In (\ref{obj7}) states that after performing purification over on selected path, a quantum state's fidelity across a quantum edge must be greater than or equal to the previous fidelity value.

\subsection{Trace-Distance based Path Purification Algorithm} \label{TDPP}
In this section, we introduce the TDPP algorithm, which identifies potential paths for E2E entanglement and conducts purification operations on quantum links when necessary.
The algorithm, presented in Algorithm~\ref{alg:algorithm1}, systematically explores all potential paths based on the closeness centrality of nodes for each S-D pair in the networks. It updates the path's trace-distance and fidelity using the routing cost of closeness centrality as shown in Figure \ref{fig:my_label_3}(a). During each iteration, the algorithm compares the trace-distance of the chosen paths with the fidelity of each edge. When the trace-distance of a selected path exceeds the edge's fidelity, the algorithm initiates purification to maximize the fidelity and maintain the entanglement connection over that edge.

\begin{figure*}[!t]
\centering
    \subfloat[Core network with trace-distance and fidelity with closeness centrality at nodes]{
    \includegraphics[width=0.3\textwidth]{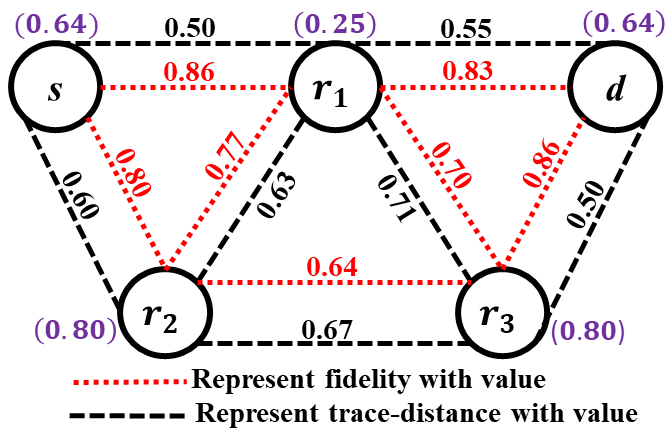}}
    \qquad
    \subfloat[Potential path based on closeness centrality]{
    \includegraphics[width=0.3\textwidth]{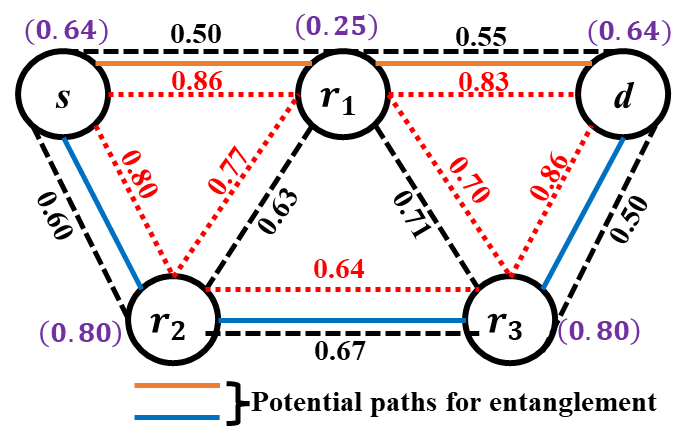}}  
    \qquad
    \subfloat[Selection of path for communication]{
    \includegraphics[width=0.3\textwidth]{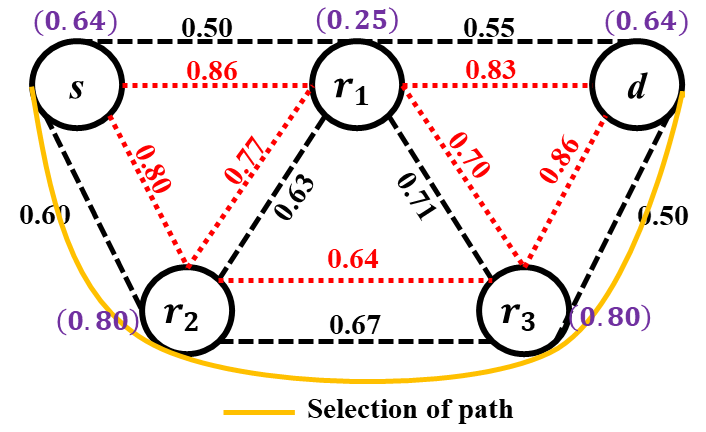}}
    \qquad
    \subfloat[Identify condition for information preservation]{
    \includegraphics[width=0.3\textwidth]{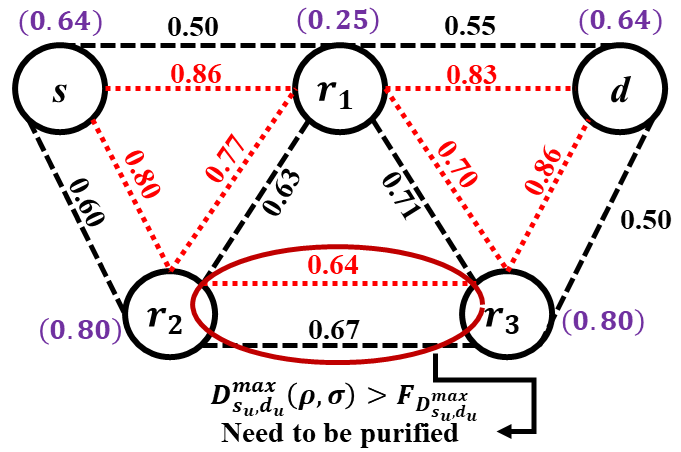}}
    \qquad
    \subfloat[Identify maximum fidelity threshold in selected path]{
    \includegraphics[width=0.3\textwidth]{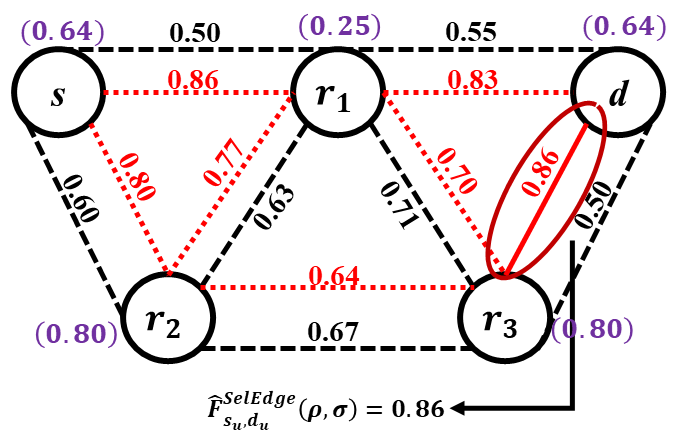}}
    \qquad
    \subfloat[Optimal path for information preservation and E2E fidelity entanglement]{
    \includegraphics[width=0.3\textwidth]{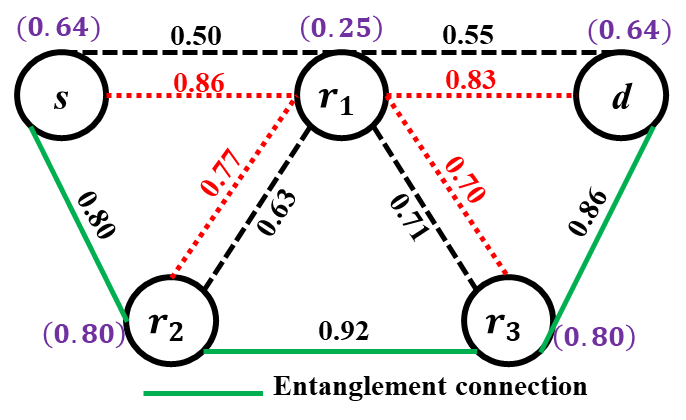}}
    \qquad
    \caption{Trace-distance based quantum network routing architecture, where fidelity values are indicated in red, trace-distance in black, and the closeness centrality of each node in purple.}
    \label{fig:my_label_3}   
\end{figure*}

\RestyleAlgo{ruled}
\begin{algorithm}[!t]
\caption{TDPP Algorithm}
\label{alg:algorithm1} 
\KwData{$G =[\mathcal{V}, E], F[\ ], D[\ ]$;}
\KwResult{Purification decision, $F_{D_{s_k,d_k}^{max}}^{purific}\left(\rho,\sigma\right)$;}
\SetKwRepeat{Do}{do}{while}
\textbf{Step 1. Initialization:}  \\
   Calculate closeness centrality $C_{v}$, $v$ $\in$ $\mathcal{V}$ using (\ref{eq5}); \\
   Calculate trace-distance $D_{u,v}$ $\in$ \textit{E} using (\ref{eq6}); \\
   Calculate fidelity $F_{u,v}$ $\in$ \textit{E} using (\ref{eq7}); \\
   Construct auxiliary graph $G^a = (\mathcal{V}, E^a)$, $E^a$ $\subseteq$ $E$; \\
   $Q \gets$ Priority queue according to closeness centrality value of $C_{v}$; \\
\textbf{Step 2. Path selection and purification process:}  \\
\For{$minCost= C_{v}^{max}  : \left|E\right|c_{u,v}$} 
{   
        Multiple shortest paths set $P_{minCost}^{SPF}$ $\gets$ using \textit{K}-shortest path algorithm \cite{yen1971finding}; \\
          \If{no path available for ($s_k$,$d_k$)}
            {
             break; \\
            } 
          \For{ $P_{i}(s_k,d_k) \in P_{minCost}^{SPF}$}
            {
               
                   \For{$u$ in range ($n$)}
                   {
                       \For{$v$ in range ($n$)}
                       {
                          \If{$D[u][v] < F[u][v]$}
                            {
                            no need for purification; \\
                            }
                        }
                    }
                 
    $D^{max}_{s_k,d_k}\left(\rho,\sigma\right)$ = update maximum trace-distance $D_{u,v}$ over an edge using (\ref{eq8}); \\
   $\hat{F}_{s_k,d_k}^{SelEdge}\left(\rho,\sigma\right)$ = update maximum fidelity $F_{u,v}$ over an edge using (\ref{eq9}); \\
            }  
    \Do{ \text{check condition using (\ref{eq18b})}}{$F_{D_{s_k,d_k}^{max}}^{purific}\left(\rho,\sigma\right)$ = purification using (\ref{eq18a});}
    \textbf{return} {$F_{D_{s_k,d_k}^{max}}^{purific}\left(\rho,\sigma\right)$;}
}  
\end{algorithm}

The TDPP algorithm prioritizes closeness centrality over the number of hops typically favored in classical routing algorithms, 
such as Dijkstra algorithm \cite{caleffi2017optimal, santos2023shortest}, to ensure the chosen routing path is the most reliable for information flow in quantum networks. The algorithm takes into account the factors that secure the preservation of information and entanglement fidelity between S-D pairs, ensuring the selected path is optimal for this purpose. This process comprises two specific steps as follows:

\emph{1) \textbf{Initialization:}}The first step in the TDPP procedure is to measure closeness centrality. 
It quantifies the average length of the shortest path from a node to all other nodes in the network. Formally, for a given node $v$, the closeness centrality $C_{v}$ \cite{bloch2023centrality}, measures to which extent a node $v$ is near to all the other nodes along the shortest paths, where $v \in \mathcal{V}$ in a network $G$:  

\begin{equation}\label{eq5}
C_{v}=\frac{N-1}{\sum_{u\neq v}d\left(v,u\right)}.
\end{equation}
Where $N$ is the total number of nodes in the network and $d\left(v,u\right)$ which is the shortest path length between nodes $v$ and $u$. This metric is particularly useful for identifying influential nodes in terms of information spread or communication efficiency within the network.
Consequently, closeness centrality helps to identify the potential paths for entanglement generation in the network.
 For example, in the given network \ref{fig:my_label_3}(b), there are two potential paths available $s\rightarrow r_2\rightarrow r_3\rightarrow d$, and $s\rightarrow r_1\rightarrow d$, receptively. 
However, given the closeness centrality values shown in Figure~\ref{fig:my_label_3}(b), the closeness is larger in the path $s\rightarrow r_2\rightarrow r_3\rightarrow d$ compared to the path $s\rightarrow r_1\rightarrow d$.
Hence, we select the $s\rightarrow r_2\rightarrow r_3\rightarrow d$ path for both communication and E2E entanglement distribution, as depicted in Figure \ref{fig:my_label_3}(c).
We devise an iterative method to determine the most efficient route at the minimum cost possible.
The method uses the cost as the basis for each iteration, which helps guide the path search. We use the $K$-shortest path algorithm \cite{yen1971finding} to identify several equally low-cost routes.
After identifying low-cost routes, we find out the trace-distance of quantum states of the selected routes $P_{i}(s_k,d_k)$ as shown in Figure \ref{fig:my_label_3}(c), i.e., $D_{u,v}\left(\rho,\sigma\right)$ for information preservation over the networks. The trace-distance $D_{u,v}\left(\rho,\sigma\right)$ can be calculated as:
 \begin{equation}\label{eq6}
D_{u,v}(\rho,\sigma)=\frac{1}{2}(\rho-\sigma),
\end{equation}
where $\left(u,v\right)\in E$ and $\left(\rho,\sigma\right)$ is the density matrix. 
Assign the trace-distance and fidelity values over an edge to establish an entanglement connection between the S-D pairs.
The selected route $P_{i}(s_k,d_k)$, i.e., $s\rightarrow r_2\rightarrow r_3\rightarrow d$ needs to check the trace-distance and fidelity values over the edge $\left(u,v\right)$. 
It demonstrates that the degradation of quantum states and the loss of information in the network are reflected by the maximum trace-distance value over the edge, in contrast to fidelity, as illustrated in Figure \ref{fig:my_label_3}(d). So, the fidelity of the selected path needs to be improved by using extra quantum operations on path $P_{i}\left(s_k,d_k\right)$. We can calculate it using Lemma \ref{lemma1} as:
%
 \begin{equation}\label{eq7}
 F_{u,v}\left(\rho,\sigma\right)=tr\sqrt{{\rho}^\frac{1}{2},{\sigma}^\frac{1}{2}}.
 \end{equation}
 %
 Where $F_{u,v}\left(\rho,\sigma\right)$ denotes the fidelity over the edge ($u, v$). For example, in Figure \ref{fig:my_label_3}(d) the selected minimum cost path for entanglement generation is $s\rightarrow r_2\rightarrow r_3\rightarrow  d$. 
 The trace-distance of the links $\left(s,r_2\right)$, $\left(r_2,r_3\right)$ and $\left(r_3,d\right)$ are 0.60, 0.67, and 0.50, and the corresponding fidelity of the links are 0.80, 0.64, and 0.86, respectively.

\emph{2) \textbf{Path Selection and Purification Process:}} 
The TDPP method determines the most efficient routing path by considering the minimal cost associated with a node in quantum networks. This method employs the node's minimum cost as a metric for each iteration, streamlining both path selection and search processes.
This approach uses the \textit{K}-shortest path algorithm \cite{yen1971finding} to determine multiple shortest paths with the same minimal closeness distance between quantum nodes. 

After the generation of entangled pairs, entanglement switching is necessary to connect the end-to-end entanglement connection. Considering the environmental noise and imperfect measurement of the quantum gate (Pauli-X-gate), the selected path faces where the maximum trace-distance and degradation of fidelity. 
The purification decision process is designed to ensure the preservation of end-to-end information and fidelity by minimizing the trace-distance. This involves checking the condition using \eqref{new}  step-by-step to achieve the desired outcome.
\begin{equation}\label{new}
    F_{u,v}\left(\rho,\sigma\right)\geq1-\frac{1}{2}D_{u,v}\left(\rho,\sigma\right).
\end{equation}
To provide all possible purification options and corresponding costs of selected S-D pairs trace-distance, and fidelity over an edge, we calculated the updated cost for edge $\left(u,v\right)\in\ E$ using (\ref{eq8}), (\ref{eq9}), and (\ref{eq:18}) as follows:
\begin{equation}\label{eq8}
\begin{split}
D_{s_k,d_k}^{max}(\rho,\sigma) =\max (D_{s_k,r_i}\left(\rho,\sigma\right), D_{r_i,s_k}\left(\sigma,\rho\right),  \cdots, &\\ D_{r_j,d_k}\left(\rho,\sigma\right),  D_{d_k,r_j}\left(\sigma,\rho\right)), \forall\ s_k, r_i, r_j, d_k \in \mathcal{V},
\end{split}
\end{equation}
\begin{equation}\label{eq9}
\begin{split}
{\hat{F}}_{s_k,d_k}^{SelEdge}(\rho,\sigma) = \max (F_{s_k,r_i}\left(\rho,\sigma\right),F_{r_i,s_k}\left(\sigma,\rho\right), & \\  \cdots, F_{r_j,d_k}\left(\rho,\sigma\right), F_{d_k,r_j}\left(\sigma,\rho\right)), \forall\ s_k, r_i, r_j, d_k \in \mathcal{V},
\end{split}
\end{equation}
\begin{subnumcases}{\label{eq:18}}
 \label{eq18a}   \!\!  F_{D_{s_k,d_k}^{max}}^{purific}\left(\rho,\sigma\right)=\sqrt{{\hat{F}}_{s_k,d_k}^{SelEdge}\left(\rho,\sigma\right)},\\
 \label{eq18b}   \!\!F_{D_{s_k,d_k}^{max}}^{purific}\left(\rho,\sigma\right)\geq{\hat{F}}_{s_k,d_k}^{SelEdge}\left(\rho,\sigma\right),
 \end{subnumcases}
where $D^{max}_{s_k,d_k}\left(\rho,\sigma\right)$ is the maximum trace-distance of an edge in the selected path, $\hat{F}_{s_k,d_k}^{SelEdge}\left(\rho,\sigma\right)$ is maximum fidelity of an edge in the selected path and $F_{D_{s_k,d_k}^{max}}^{purific}\left(\rho,\sigma\right)$ is purified fidelity of the edge $(u,v)$ after performing purification operation on a selected path in S-D pairs. The fidelity of edge $(u,v)$, i.e., $F_{s_k,d_k}\left(\rho,\sigma\right)$ is calculated during purification operation as in Lemma~\ref{lemma1} to achieve high-fidelity entanglement generation. 
For example, in the selected path $s\rightarrow r_2\rightarrow r_3\rightarrow d$ for entanglement generation, we select the maximum fidelity of the path as threshold fidelity, as shown in Figure \ref{fig:my_label_3}(e), and the maximum trace-distance of the selected path after an update is $\hat{F}_{s_k,d_k}^{SelEdge}\left(\rho,\sigma\right)$$=0.86$, and $D^{max}_{s_k,d_k}\left(\rho,\sigma\right)$$=0.67$, respectively. 
This means that along the selected path, the link with the maximum trace distance is precisely the one with the lowest fidelity.
Hence, we need to purify the link where the trace-distance is maximum to preserve the information in the quantum network.
For this, we use \eqref{eq18a} and \eqref{eq18b} to purify the link fidelity of selected paths  $s\rightarrow r_2\rightarrow r_3\rightarrow  d$.
Then the resulting fidelity after improvement will be $F_{D_{s_k,d_k}^{max}}^{purifc}\left(\rho,\sigma\right)$$=0.92$, as shown in Figure \ref{fig:my_label_3}(f). Thus, fidelity improvement is increasing along with decreasing the trace-distance during the purification. Finally, the optimal path for entanglement is shown in Figure \ref{fig:my_label_3}(f).

However, the purification constraint $D[u][v] < F[u][v]$ must be satisfied for the routing path. If an edge $(u, v)$ cannot provide entangled pairs that satisfy $D^{max}(s_k,d_k) \ge F_{D_{s_k,d_k}^{max}}^{purifc}\left(\rho,\sigma\right)$ even after purification, it must be removed from the graph $G$ to reduce complexity. Once this is done, TDPP creates an updated graph $G^a$ to keep track of the purification decisions. Finally, TDPP uses the closeness centrality (\ref{eq5}) to find the path in graph $G$, ensuring the minimum possible cost. Throughout the iteration process, TDPP also constructs a priority queue to store potential routing paths $P_{minCost}^{SPF}$ and repeats the process as needed.

\begin{figure*}[!t]
\centering
    \subfloat[$\alpha$ = 0.4, $\beta$ = 0.4]{
    \includegraphics[width=0.3\textwidth]{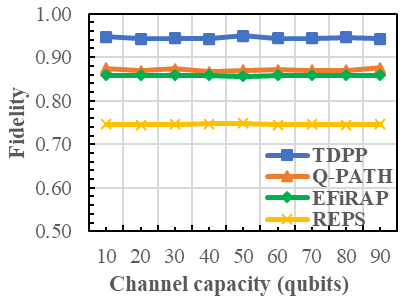}}
    \qquad
    \subfloat[$\alpha$ = 0.4, $\beta$ = 0.5]{
    \includegraphics[width=0.3\textwidth]{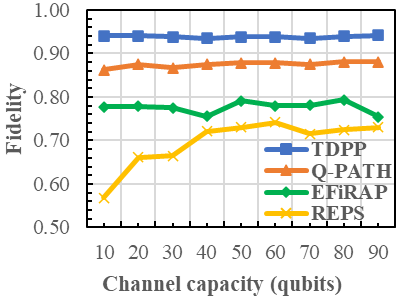}}
    \qquad
    \subfloat[$\alpha$ = 0.4, $\beta$ = 0.6]{
    \includegraphics[width=0.3\textwidth]{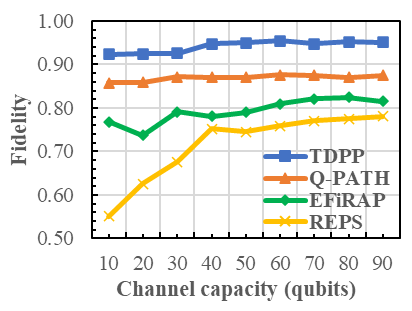}}
    \qquad
    \subfloat[$\alpha$ = 0.5, $\beta$ = 0.4]{
    \includegraphics[width=0.3\textwidth]{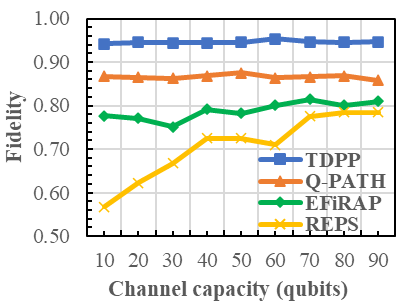}}
    \qquad
    \subfloat[$\alpha$ = 0.5, $\beta$ = 0.5]{
    \includegraphics[width=0.3\textwidth]{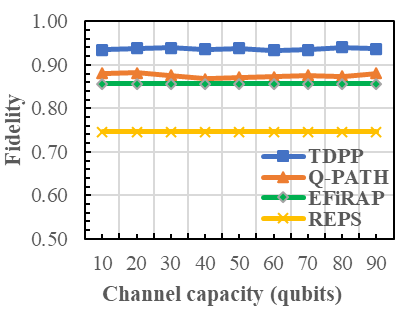}}
    \qquad
    \subfloat[$\alpha$ = 0.5, $\beta$ = 0.6]{
    \includegraphics[width=0.3\textwidth]{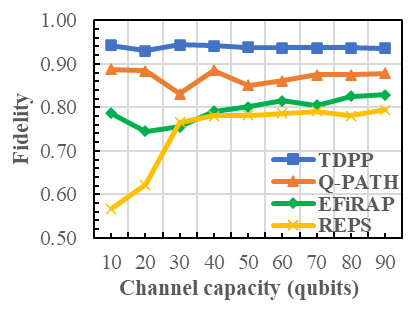}}
    \qquad
    \subfloat[$\alpha$ = 0.6, $\beta$ = 0.4]{
    \includegraphics[width=0.3\textwidth]{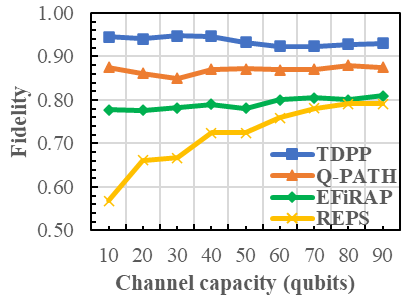}}
    \qquad
    \subfloat[$\alpha$ = 0.6, $\beta$ = 0.5]{
    \includegraphics[width=0.3\textwidth]{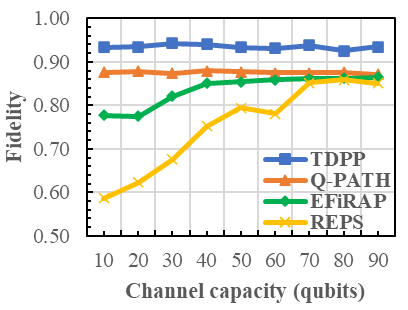}}
    \qquad
    \subfloat[$\alpha$ = 0.6, $\beta$ = 0.6]{
    \includegraphics[width=0.3\textwidth]{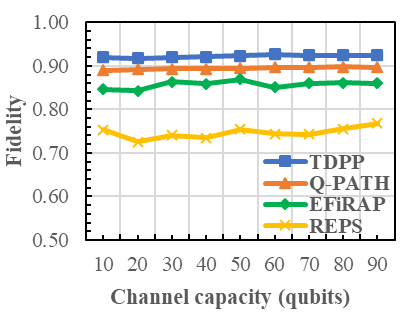}}
    \caption{Fidelity performance comparison of multiple S-D pairs in communication networks.}
    \label{fig:my_label_1}   
\end{figure*}

\section{Performance Analysis} \label{PA}
\subsection{Simulation Methodology} 
We conducted a series of numerical evaluations to implement the proposed routing algorithms in quantum networks. The simulation platform utilized for this purpose featured an AMD Ryzen 7 3700X 3.6GHz CPU with 32GB RAM and ran on a Windows-10 64-bit system. As quantum networks evolve, they are anticipated to become essential infrastructure for secure communications and quantum applications, akin to the current Internet backbone. Therefore, we utilized the US backbone network \cite{orlowski2010sndlib} as the topology for our simulation. We conducted simulations $1000$ times for each parameter configuration and calculated the average results. The initial fidelity of entangled pairs adhered to a normal distribution with a mean of $0.80$ and a standard deviation of $0.1$. The standard qubit lifetime is $1.46$ seconds, so we used a synchronization timestamp of 500ms. Each node is assigned $20$ qubits of quantum memory, and the channel capacity of the links between adjacent nodes ranges from $10$ to $90$ qubits.

As a baseline, we used the routing scheme proposed in \cite{li2021effective}, which includes a purification operation.
The entanglement generation rate, which represents the number of successfully established entangled pairs in the network during a fixed time window, is referred to as the throughput (qubits/slot) of the system in routing terminology.
This scheme also adopts a proportional share for resource allocation.
In our experiment, we utilized the US backbone network as the network topology. We employed quantum state $\left.\left|\psi\right.\right\rangle=\alpha\left.\left|0\right.\right\rangle+\beta\left.\left|1\right.\right\rangle$ where $\alpha$ and $\beta$ act as probability amplitude of getting state $0$ and state $1$, respectively.
For example, $\left.\left|\psi\right.\right\rangle=$0.4$\left.\left|0\right.\right\rangle+$0.5$\left.\left|1\right.\right\rangle$. 
These probability amplitude coefficients ($\alpha$ and $\beta$) need to be normalized to ensure that they fall within the range of [$0$, $1$]. 
Therefore, we select the maximal mixed quantum state within the range of (0.4, 0.5, 0.6), which indicates greater mixedness and lower information content. This choice impacts the purity of quantum states used in the network and the efficiency of quantum information processing.
To gauge the effectiveness of our algorithm, we compared it against three other existing algorithms, namely REPS \cite{zhao2021redundant}, EFiRAP \cite{zhao2022e2e}, and Q-PATH \cite{li2022fidelity}. This comparison was made regarding routing and purification techniques, network throughput, and fidelity.

\begin{figure*}[!t]
\centering
    \subfloat[$\alpha$ = 0.4, $\beta$ = 0.4]{
    \includegraphics[width=0.3\textwidth]{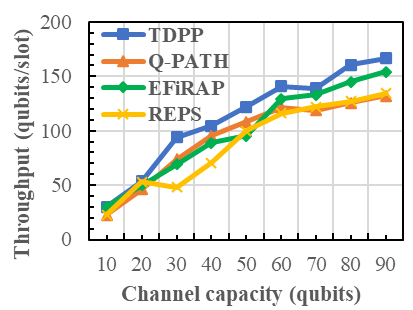}}
    \qquad
    \subfloat[$\alpha$ = 0.4, $\beta$ = 0.5]{
    \includegraphics[width=0.3\textwidth]{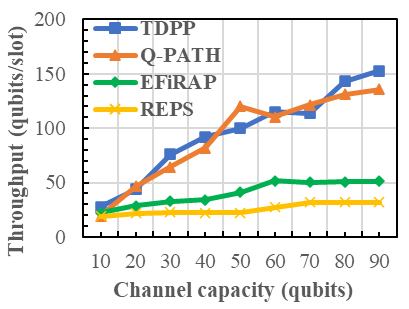}}
    \qquad
    \subfloat[$\alpha$ = 0.4, $\beta$ = 0.6]{
    \includegraphics[width=0.3\textwidth]{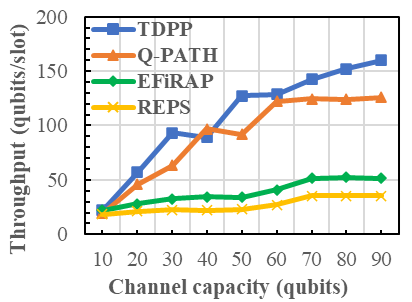}}
    \qquad
    \subfloat[$\alpha$ = 0.5, $\beta$ = 0.4]{
    \includegraphics[width=0.3\textwidth]{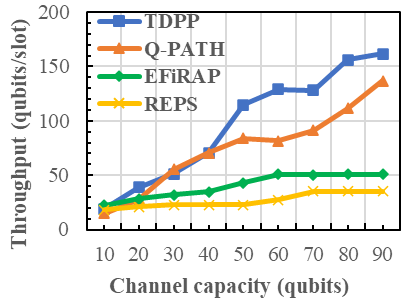}}
    \qquad
    \subfloat[$\alpha$ = 0.5, $\beta$ = 0.5]{
    \includegraphics[width=0.3\textwidth]{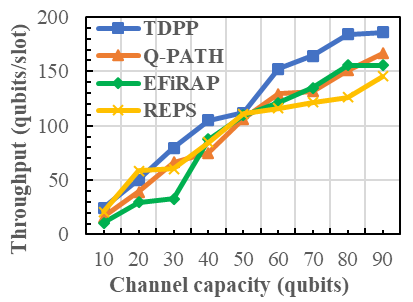}}
    \qquad
    \subfloat[$\alpha$ = 0.5, $\beta$ = 0.6]{
    \includegraphics[width=0.3\textwidth]{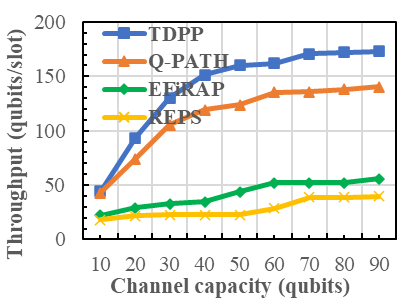}}
    \qquad
    \subfloat[$\alpha$ = 0.6, $\beta$ = 0.4]{
    \includegraphics[width=0.3\textwidth]{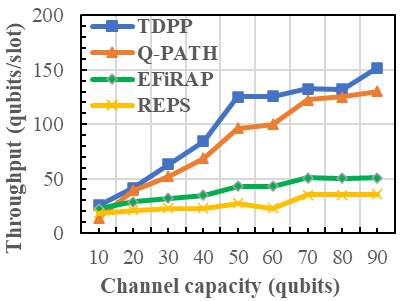}}
    \qquad
    \subfloat[$\alpha$ = 0.6, $\beta$ = 0.5]{
    \includegraphics[width=0.3\textwidth]{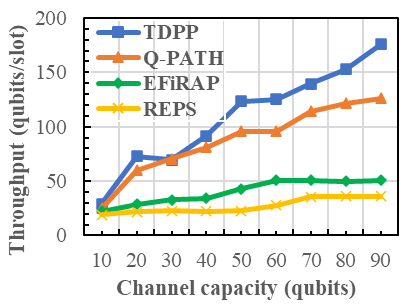}}
    \qquad
    \subfloat[$\alpha$ = 0.6, $\beta$ = 0.6]{
    \includegraphics[width=0.3\textwidth]{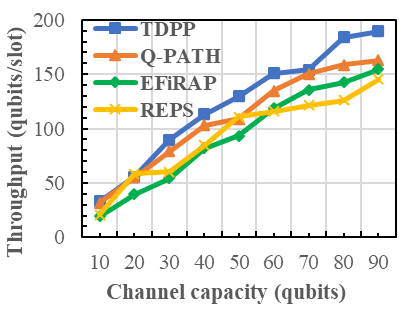}}
    \caption{Throughput-performance comparison of multiple S-D pairs in communication networks.}
    \label{fig:my_label_2}   
\end{figure*}

\subsection{Fidelity Performance Trade-off Analysis} 
Figures ~\ref{fig:my_label_1}(a)$-$(i) show the fidelity performance of the algorithms under different probability amplitude ranges of $\alpha$ and $\beta$ values of ($0.4$, $0.5$, $0.6$). In all scenarios of $\alpha$ and $\beta$, our proposed TDPP algorithm achieves high-fidelity entangled pairs. This outcome is due to our algorithm performing purification based on trace-distance rather than the fidelity threshold. Upon careful observation of these results, we can categorize them into three cases: \textbf{case 1.} where $\alpha = \beta$ depicted in Figures ~\ref{fig:my_label_1}(a), (e), and (i); \textbf{case 2.} $\alpha < \beta$ as in Figures~\ref{fig:my_label_1}(b), (c), and (f); and \textbf{case 3.} $\alpha > \beta$ as in Figure~\ref{fig:my_label_1}(d), (g), and (h).

\emph{1) \textbf{$\alpha = \beta$:}} When we set the $\alpha =  \beta$ = 0.4, 0.5, and 0.6 in Figures \ref{fig:my_label_1}(a), (e), and (i), respectively, our proposed TDPP algorithm performs better compared to other existing algorithms in all three scenarios. 
This is because in TDPP, the difference in trace-distance between the nodes becomes minimal, and the fidelity of the link is high.
In Figures \ref{fig:my_label_1}(a) and (e), the fidelity difference between the Q-PATH and EFiRAP algorithms is very minimal. This is because, when the channel capacity is high, multi-round purification operations are needed to meet the end-to-end fidelity requirement. As a result, the number of available entangled pairs on each quantum channel decreases. If all entangled pairs reach their maximum, all the routes are discovered and entangled in solution time-space, leading to a similar routing solution obtained by Q-PATH and EFiRAP algorithms. However, in Figure \ref{fig:my_label_1}(i), the performance of Q-PATH and EFiRAP is clear in terms of fidelity measurement with a utility factor of 0.6. In all three scenarios, the performance of REPS is the lowest, as it only considers entangled pairs for connection and does not require purification or threshold fidelity values to establish the connection.

\emph{2) \textbf{$\alpha < \beta$:}}
Figures~\ref{fig:my_label_1}(b), (c), and (f) show the scenario where $\alpha <\beta$. We observed that the TDPP algorithm performed better than all other algorithms, like in case-1, i.e., $\alpha = \beta$. However, the fidelity performances are not stable as in case-1. The performance of Q-PATH is relatively better than the other two algorithms. In low channel capacity, EFiRAP performs better than REPS algorithm. With the increase in channel capacity, this gap gradually decreases. This is because in the EFiRAP algorithm, entanglement is generated based on the fidelity threshold, and the REPS algorithm generates an entangled path based on available quantum resources. So, when channel capacity increases, generating a more entangled path in quantum networks is possible, showing the REPS algorithm shows linearly increasing fidelity values.

\emph{3) \textbf{$\alpha > \beta$:}}
Figures~\ref{fig:my_label_1}(d), (g), and (h) depict scenarios where $\alpha >\beta$. Our observations indicate that the performance of the algorithms is quite similar to the case-2 scenario, i.e., $\alpha <\beta$. However, the performance of the TDPP algorithm still surpasses all other algorithms. The performance of the algorithms is somewhat unstable at low channel capacities, but it stabilizes as the capacity increases.

\subsection{Throughput Performance Analysis} 
Figures~\ref{fig:my_label_2}(a)$-$(i) show the throughput performance of the algorithms in different ranges of utility factor $\alpha$ and $\beta$ values. As in Figure~\ref{fig:my_label_1}, we can group the results of Figure~\ref{fig:my_label_2} into three categories as \textbf{case 1.} $\alpha = \beta$ as in Figures~\ref{fig:my_label_2}(a), (e), and (i), \textbf{case 2.} $\alpha < \beta$ as in Figures~\ref{fig:my_label_2}(b), (c), and (f), and \textbf{case 3.} $\alpha > \beta$ as in Figures~\ref{fig:my_label_2}(d), (g), and (h).  

\emph{1) \textbf{$\alpha = \beta$:}} Figures \ref{fig:my_label_2}(a), (e), and  (i) show the TDPP algorithm has better throughput performance as compared to other existing algorithms.
This is because the TDPP algorithm works on the closeness centrality, which means that the quantum nodes are close to all other nodes in the network, and it helps to generate high-fidelity qubits. The throughput in other algorithms is also relatively stable in this scenario, i.e., $\alpha = \beta$. With the increased channel capacity, the throughput increases steadily in all cases. Because the solution space of a purifying operation is related to channel capacity, algorithms with higher channel capacity can identify superior solutions.

\emph{2) \textbf{$\alpha < \beta$:}} 
When we set the utility factor $\alpha < \beta$ in Figures \ref{fig:my_label_2}(b), (c), and (f), the TDPP algorithm has better throughput performance than other algorithms. This is because, in our proposed TDPP algorithm, 
higher link capacity will increase the throughput of the network, since more quantum channels will be available to generate qubits. Because of this, the trace-distance values between adjacent nodes are minimal leading to higher throughput.
In this scenario, EFiRAP and REPS algorithms show poor performance in throughput due to the following reasons. 
In EFiRAP, during entanglement generation, a round of purification operations is performed over links, which requires more qubits to be generated, as well as a larger quantum memory.
In the REPS, when channel capacity increases, more entanglement is generated between the adjacent nodes. However, most of it is destroyed due to network congestion and low values of EPR pairs.

\emph{3) \textbf{$\alpha > \beta$:}} 
Figures \ref{fig:my_label_2}(d), (g), and (h) demonstrate that an increase in link capacity leads to higher throughput of quantum networks in the TDPP algorithm. This is because, our proposed TDPP algorithm operates on the principle of minimizing trace-distance values between adjacent nodes, ensuring that the generated qubits are highly entangled and possess high-fidelity values. In contrast, algorithms such as Q-PATH, EFiRAP, and REPS consume more qubits to achieve similar high-fidelity values.
However, channel utilization decreases significantly when quantum memory becomes a bottleneck, as fewer qubits are available for entanglement and purification operations, substantially reducing quantum throughput in the network. Increasing the quantum memory hosted by each quantum node leads to higher throughput, but the memory utilization will decrease faster.
The performance of REPS is the worst since, in this algorithm, purification occurs before path selection and end-to-end fidelity can barely be assured.

\section{Conclusions} \label{conclusion}
This paper addresses the challenge of maintaining entanglement fidelity and reliable information flow in distributed quantum networks. 
A quantitative approach was proposed that ensures reliable information flow and preservation in quantum networks with multiple S-D pairs.
An iterative trace-distance based purification approach named TDPP was introduced that incorporates quantum state calculations, dynamic measures, and node centrality considerations.
It aims to preserve information and maintain E2E fidelity. We assess entanglement fidelity and information preservation by performing a quantum operation over a channel. The proposed TDPP approach emphasizes the importance of E2E fidelity preservation and proposes measures to preserve information between sender-receiver pairs. The results demonstrate that the TDPP algorithm significantly improves fidelity and throughput compared to existing methods across various utility factors.

In the future, we aim to address the routing problem with fidelity limitations by employing the "on-demand generation" concept. 
Additionally, we plan to investigate the correlation between the fidelity value and the likelihood of qubit error.

\begin{appendices}
\section{} \label{appendix1}
\textbf{The factor of $\frac{1}{2}$ in \eqref{eq1} ensures that the maximum possible value of the trace distance is 1.}

\begin{proof}
    
Suppose in a quantum network, a quantum node at $u$ prepares a state either $\rho$ or $\sigma$, each with probability $\frac{1}{2}$, and sends it to an adjacent quantum node $v$. Node $v$ must then discriminate between the two states using a binary measurement.
Let's consider a positive operator valued measure (POVM) $\left\{\Lambda,1-\Lambda\right\}$, where the measurement corresponds to state $\rho$ if $\Lambda$ occurs, and to state $\sigma$ otherwise. 
Then the success probability $P_{succ}$ of correctly identifying the incoming state at the adjacent node can be defined as:
\begin{equation*}
\begin{split}
    P_{succ} & =\frac{1}{2}\left(tr\left[\Lambda\rho\right]+tr\left[\left(1-\Lambda\right)\sigma\right]\right), \\ & =\frac{1}{2}\left(1+tr\left[\Lambda\left(\rho-\sigma\right)\right]\right), \\ & =\frac{1}{2}\left(1+D_{u,v}\left(\rho,\sigma\right)\right).
\end{split}
\end{equation*}
Thus, the success probability $P_{succ}$ is directly related to the trace-distance $D_{u,v}$, which quantifies the distinguishability between the two quantum states $\rho$ and $\sigma$ at nodes $u$ and $v$. 
The factor of $\frac{1}{2}$ ensures that the maximum distinguishability is bounded by 1.
Without this factor, the trace of $\left| \rho - \sigma \right|$ could be as large as $2$ for maximally distinct quantum states, since density matrices are positive semi-definite and can have a trace equal to 1. 
For example,
consider two pure states $\rho=\left.\left|0\right.\right\rangle\left\langle\left.0\right|\right.$ and $\sigma=\left.\left|1\right.\right\rangle\left\langle\left.1\right|\right.$, which are perfectly distinguishable. The difference $\rho-\sigma$ results in a matrix with eigenvalues $+1$ and $-1$. 
The trace of the absolute difference would therefore be $2$.
Thus, the factor of $\frac{1}{2}$ normalizes the trace distance, ensuring that its values lie between $0$ and $1$.

\end{proof}
\end{appendices}

\bibliographystyle{IEEEtran}
\bibliography{tdpp.bib}


\begin{IEEEbiographynophoto}{Pankaj Kumar}
 earned his M.Tech. degree in computer science and engineering from the Indian Institute of Technology (ISM) in Dhanbad, India. He is currently pursuing a Ph.D. degree in computer science and information engineering (CSIE) at National Taiwan University of Science and Technology (NTUST), Taiwan. His research focuses on network design, quantum algorithms, quantum information theory, quantum communications, and QKD networks.
 \end{IEEEbiographynophoto}

\begin{IEEEbiographynophoto}{Binayak Kar}
is an Assistant Professor of computer science and information engineering at National Taiwan University of Science and Technology (NTUST), Taiwan. He received his Ph.D. degree in computer science and information engineering from the National Central University (NCU), Taiwan, in 2018. He was a post-doctoral research fellow in computer science at National Chiao Tung University (NCTU), Taiwan, from 2018 to 2019. His research interests include edge computing, cybersecurity, and quantum computing.
\end{IEEEbiographynophoto}

\begin{IEEEbiographynophoto}{Shan-Hsiang Shen}
received the M.S. degree
from National Chiao Tung University, Republic
of China, in 2004, and the Ph.D. degree from
the University of Wisconsin, USA, in 2014.
He is currently an Associate Professor with the
Computer Science and Information Engineering
Department, National Taiwan University of Science
and Technology, Taiwan. His main research interests
include software-defined networking, network function virtualization, network security, and cloud computing.    
\end{IEEEbiographynophoto}

\end{document}